\documentclass[aps,pra,reprint,superscriptaddress,longbibliography]{revtex4-2}

\usepackage{amsmath,amssymb}
\usepackage{graphicx}
\usepackage[nopatch=footnote]{microtype}
\usepackage{placeins}
\usepackage{url}
\usepackage{tikz}
\usetikzlibrary{calc,decorations.pathreplacing}
\usepackage[dvipsnames]{xcolor}
\usepackage[colorlinks=true,allcolors=blue]{hyperref}

\newcommand{\ket}[1]{\lvert #1\rangle}
\newcommand{\bra}[1]{\langle #1\rvert}
\newcommand{\braket}[2]{\langle #1\vert #2\rangle}
\newcommand{\Pusd}{P_{\mathrm{USD}}}
\newcommand{\Eenv}{\mathsf E}
\newcommand{\Rt}{\mathsf R}
\DeclareMathOperator{\tr}{tr}
\newtheorem{proposition}{Proposition}

\begin{document}

\title{Symmetry-resolved tree tensor network analysis of Bell-state discrimination with ancilla-assisted passive linear optics}

\author{Anand Kumar}
\email{anand89694@bhu.ac.in}
\affiliation{Department of Physics, Institute of Science, Banaras Hindu University, Varanasi 221005, India}

\author{Wojciech Roga}
\affiliation{Department of Electronics and Electrical Engineering, Keio University, 3-14-1 Hiyoshi, Kohoku-ku, Yokohama 223-8522, Japan}

\author{Devendra Kumar Mishra}
\affiliation{Department of Physics, Institute of Science, Banaras Hindu University, Varanasi 221005, India}

\author{Masahiro Takeoka}
\affiliation{Department of Electronics and Electrical Engineering, Keio University, 3-14-1 Hiyoshi, Kohoku-ku, Yokohama 223-8522, Japan}

\begin{abstract}
Ancillary photons allow passive linear-optical Bell-state discrimination to exceed the one-half probability of success limit achievable with vacuum auxiliary modes. For a fixed analyzer and ancillary resource, the discrimination probability is determined by the photon-counting patterns that occur uniquely for each Bell input. We develop a photon-number-resolved tree tensor network method for evaluating these detector supports in recursively structured analyzers. Photon-number conservation and mode symmetries separate most of the Bell-state outputs before the remaining detector support is examined. We establish a tree tensor network based construction to evaluate the remaining support without need of recalling all relevant photon detector patterns separately. We apply the method to recursive, product, and asymmetric ancillary states with up to $32$ optical modes and reproduce known analytical and literature benchmarks. For factorized ancillary resources with fixed photon numbers in halves of the analyzer and the symmetry property used in our analysis, we also derive an exact composition relation; for reflected asymmetric pairings the success probability is the arithmetic mean of the corresponding symmetric configurations. The complete enumeration for smaller systems, permanent-based amplitude calculations, and independent evaluations provide additional checks of the numerical results.
\end{abstract}

\maketitle

\section{Introduction}
\label{sec:intro}
Bell-state measurements (BSMs) are fundamental to quantum teleportation, entanglement swapping, and photonic quantum information processing, including fusion-based photonic architectures~\cite{BennettTeleportation1993,KLM2001,BrowneRudolph2005,KokRMP2007,BianchiMarconiBacco2026,ZukowskiEntanglementSwapping1993,Bartolucci2023FBQC}. For four equal-prior dual-rail Bell states, a fixed passive linear-optical analyzer using only vacuum auxiliary modes and ideal photon-number-resolving detector has an unambiguous success probability no larger than one-half~\cite{Lutkenhaus1999,Calsamiglia2001}. This bound applies to that restricted optical resource class and is not a general limit on Bell measurements.

Non-vacuum photonic ancillas can raise the success probability above one-half without introducing active optical elements. Grice~\cite{Grice2011} obtained a success probability of three-quarters using an entangled ancillary resource and introduced a recursive family whose success probability approaches unity. Ewert and van Loock~\cite{EwertvanLoock2014} reached the same probability with unentangled ancillary single photons and developed a different recursive extension together with elementary-block replacements. Subsequent work has considered interferometer optimization, analytical bounds, and ancillary-single-photon constructions~\cite{SmithKaplan2018,OlivoGrosshans2018,YamazakiIkutaYamamoto2023}. In particular, Yamazaki \textit{et al.}~\cite{YamazakiIkutaYamamoto2023} obtained a maximum success probability of $403/512$ with $28$ ancillary photons. Ancilla-assisted boosted BSMs have also been demonstrated experimentally and used in photonic fusion and teleportation settings~\cite{Bayerbach2023,Hauser2025BoostedBSM,DAurelio2025BoostedTeleportation}. Other resource models include active Gaussian processing~\cite{ZaidiVanLoock2013}, hybrid photon-counting and homodyne detection~\cite{Asenbeck2024HybridBSM}, encoded or multi-photon states~\cite{LeeJeong2013,LeeParkRalphJeong2015,Hilaire2023,ReissVanLoock2026}, auxiliary photonic degrees of freedom~\cite{WeiBarreiroKwiat2007,PisentiGaeblerLynn2011,LahaVanLoock2026}, and nonlinear interactions~\cite{AkinNonlinear2025}.

For a fixed analyzer and ancillary resource, unambiguous Bell discrimination is a support problem for four photonic outputs rather than only an amplitude-evaluation problem. A photon-counting record is conclusive for one Bell input only when that record is present for the target input and absent for every competing input. Individual passive-linear-optical Fock amplitudes can be evaluated from matrix permanents~\cite{Scheel2008}, which itself can be computationally challenging problem, beyond that the success probability requires identifying all detector records that occur for exactly one Bell input. Their number grows combinatorially with the number of modes and photons; for the largest case studied here, $32$ modes and $30$ photons give about $2.33\times10^{17}$ possible occupation patterns before any symmetry is applied. Structured classical methods for linear optics can exploit photon-number conservation, sparsity, graph structure, compressive representations, and tensor-network factorizations~\cite{CliffordClifford2018,roga2020classical,OhGraph2022,jacob2020franck,Oh2024,VintherKastoryano2025,Cilluffo2026}. For Bell discrimination, Yamazaki \textit{et al.}~\cite{YamazakiIkutaYamamoto2023} developed a bosonic stabilizer formalism that organizes Fock outcomes into stabilizer spaces and provides suppression conditions and stabilizer-measurement probabilities.  

The problem addressed here is to determine, for chosen interferometers and auxiliary photonic states, Bell-label-resolved unique support of detector patterns without enumerating the full detection space. We use a photon-number-resolved tree tensor network (TTN) whose balanced tree structure follows the chosen recursive optical analyzer. 

We first use two quantum-optical constraints to remove most Bell-state competition before the remaining support is examined. The photon number in each half of the analyzer restricts which Bell-conditioned outputs can occupy the same sector. In addition, the difference between photon numbers in globally even and odd indexed modes, taken modulo four, further separates the Bell labels. For the ancillary resources considered here, these constraints leave only Bell states $\phi^+$--$\phi^-$ competition in the relevant sectors. 

Photon-number conservation implies an additional structure in the virtual bonds of TTN~\cite{ShiDuanVidal2006,tagliacozzo2009simulation,Orus2014,SinghPfeiferVidal2010,SinghPfeiferVidal2011}, so each subtree is organized by the number of photons it contains. Rather than constructing the complete detector space, our calculation retains the support information passed through the analyzer hierarchy and, at the final cut, through its halves. The TTN is used here as a structured finite instance representation of the Bell-conditioned outputs. In this work, we do not claim any general asymptotic computational advantage.

The optical structure also yields an exact result independent of the TTN representation. For ancillary inputs that factorize between the two analyzer halves, with definite photon number and a fixed even--odd photon-number difference modulo four in each half, the outer $\phi^+$ and $\phi^-$ branches probe the two ancillary halves separately. Their unique contributions therefore combine directly. For a resource formed from one ancillary half and the reflected ordering of another, this gives an exact composition rule. The aggregate unambiguous-discrimination probability is the arithmetic mean of the two corresponding reflected symmetric configurations. Such an asymmetric pairing therefore cannot outperform the better symmetric endpoint within this resource class. The composition relation also expresses the two outer-sector contributions in terms of half-resource support quantities. In the calculations reported here, these quantities are evaluated using the TTN and independently checked with the involved half-circuit calculation.

We apply the construction to recursive ancillary resources, products of elementary ancillary blocks, asymmetric recursive--product compositions, and a reflection-covariant mixed-block case, with calculations extending to $32$ optical modes. Known aggregate probabilities are recovered where analytical or literature benchmarks are available, while our calculation additionally resolves the root photon-number-sector contributions for each Bell state. Complete detector-space enumeration at the smallest size, permanent based calculations of selected amplitudes, comparisons between distinct TTN update paths, and separately implemented calculations on halves of the circuits provide checks on the reported results.

The remainder of the paper is organized as follows. In Sec.~\ref{sec:model}, we define the physical model, ancillary resources, analyzer, and photon-counting discrimination objective. In Sec.~\ref{sec:ttn}, we develop the symmetry-resolved TTN representation and the optical symmetry reductions, while in Sec.~\ref{sec:screen}, we introduce the so-called directional support construction to recognize support unique for a Bell state and prove its completeness. In Sec.~\ref{sec:results}, we present the support-resolved results including probabilities for unambiguous discrimination of each and discuss numerical validation. In Sec.~\ref{sec:discussion} we discuss the scope, limitations, and outlook of the method.

\section{Physical model and discrimination objective}\label{sec:model}

We consider a fixed passive Bell-state analyzer supplied with ancillary photonic states and followed by photon-number-resolving detectors (PNRDs).

\subsection{Bell inputs and ancillary resources}

We first specify how the Bell input and ancillary photons are arranged at the analyzer input. We consider $M=2^{N+2}$ optical modes, with $N\geq1$. The four modes carrying the dual-rail Bell state lie at the center, while the ancillary modes occupy the two sides. The four Bell modes, in increasing physical indices, are
\[(b_0,b_1,b_2,b_3)=\left(\frac{M}{2}-2,\frac{M}{2}-1,\frac{M}{2},\frac{M}{2}+1\right).\] Modes $(b_0,b_1)$ and $(b_2,b_3)$ form the two dual-rail pairs. We refer to modes $0,\ldots,M/2-1$ and $M/2,\ldots,M-1$ as halves $A$ and $B$ of the analyzer, respectively, so the cut $A\mid B$ lies between the two pairs of Bell modes.

In the ordered modes $(b_0,b_1,b_2,b_3)$, the four Bell inputs are
\begin{align}\ket{\mathcal B_{\psi^\pm}}&=\frac{1}{\sqrt2}\left(\ket{1,0,0,1}\pm\ket{0,1,1,0}\right),\nonumber\\
\ket{\mathcal B_{\phi^\pm}}&=\frac{1}{\sqrt2}\left(\ket{1,0,1,0}\pm\ket{0,1,0,1}\right).\label{eq:bell_states}\end{align}
We use $\ell\in\{\psi^+,\psi^-,\phi^+,\phi^-\}$ to label the Bell states. The ancillary states $\ket{\mathcal A_A}$ and $\ket{\mathcal A_B}$ occupy the ancillary modes of the halves $A$ and $B$, respectively, so the complete input is
\begin{equation}
\ket{\Psi_\ell^{\mathrm{in}}}
=\ket{\mathcal A_A}\otimes\ket{\mathcal B_\ell}\otimes\ket{\mathcal A_B}.
\label{eq:input_bell_ancilla}
\end{equation}
The tensor product in Eq.~\eqref{eq:input_bell_ancilla} follows the ordering of the physical modes. It is not a factorization across the $A\mid B$ cut because the four-mode Bell state spans across both halves of the analyzer.

We use the ancillary states introduced in Ewert and van Loock~\cite{EwertvanLoock2014} defined as
\[\ket{\gamma_\tau}=\frac{\ket{2,0}^{\otimes 2^{\tau-1}}+\ket{0,2}^{\otimes2^{\tau-1}}}{\sqrt2},\quad \tau\geq1.\]
In particular, \[\ket{\gamma_1}=\frac{\ket{2,0}+\ket{0,2}}{\sqrt2}.\]
The tensor factors follow increasing physical-mode order. The block $\ket{\gamma_\tau}$ occupies $2^\tau$ modes and contains $2^\tau$ photons.

In what we refer to this as the recursive resource, the two ancillary state blocks are
\[\begin{aligned}\ket{\mathcal A_A}&=\ket{\gamma_N}\otimes\ket{\gamma_{N-1}}\otimes\cdots\otimes\ket{\gamma_1},\\ 
\ket{\mathcal A_B} &=\ket{\gamma_1}\otimes\ket{\gamma_2}\otimes\cdots\otimes\ket{\gamma_N}.\end{aligned}\] The factors are mirrored across the Bell modes, so $\ket{\gamma_1}$ is adjacent to the Bell input on both sides. 

As a second resource family, we use only elementary $\gamma_1$ blocks,
\[\ket{\mathcal A_A}=\ket{\mathcal A_B}=\ket{\gamma_1}^{\otimes(2^N-1)}.\]
In both resource families, each ancillary half occupies $M/2-2$ modes and contains $M/2-2$ photons. The two resource families coincide at $M=8$. For $M\geq16$, the recursive resource contains the higher order blocks $\gamma_2,\ldots,\gamma_N$, whereas the product resource contains only the blocks $\gamma_1$.

\subsection{Fixed passive analyzer}
\label{subsec:fixed-analyzer}
We use the fixed passive analyzer of Ewert and van Loock~\cite{EwertvanLoock2014}, with the physical-mode ordering defined above and the beam-splitter convention given below. It consists of a central Bell layer, followed by two identical transformations that act independently on the analyzer halves $A$ and $B$. We understand the output modes as those corresponding to the rows of the matrix and input modes as those corresponding to the columns.

We adapt the convention for the balanced beam-splitter as follows
\[U_{\mathrm{BS}}=\frac{1}{\sqrt2}\begin{pmatrix}1&\mathrm{i}\\ \mathrm{i}&1\end{pmatrix}.
\] With this, a balanced beam splitter maps $\ket{1,1}$ to $\mathrm{i}\ket{\gamma_1}$.

Within each half, the modes are arranged in increasing order of physical index. Each half of the analyzer contains $2^{N+1}=M/2$ modes. At order $N$, two copies of the order-$(N-1)$ transformation act on equal subblocks, and balanced beam splitters mix the corresponding modes. Using $\mathbb I_2$ for the $2\times2$ identity operator, the mode transformation on either half is
\begin{align}U_{\mathrm h}^{(N)}&=\frac{1}{\sqrt2}\begin{pmatrix}U_{\mathrm h}^{(N-1)}&\mathrm{i}U_{\mathrm h}^{(N-1)}\\ \mathrm{i}U_{\mathrm h}^{(N-1)}&U_{\mathrm h}^{(N-1)}
\end{pmatrix}\nonumber\\ &=U_{\mathrm{BS}}\otimes U_{\mathrm h}^{(N-1)}
=U_{\mathrm{BS}}^{\otimes N}\otimes\mathbb I_2,\quad U_{\mathrm h}^{(0)}=\mathbb I_2.
\label{eq:half_analyzer}\end{align}

The central Bell layer mixes the mode pairs $(b_0,b_2)$ and $(b_1,b_3)$. Here $(U_{\mathrm{BS}})_{ij}$ denotes $U_{\mathrm{BS}}$ acting on modes $i,j$ and the identity on all remaining modes, and $\oplus$ denotes the block-diagonal direct sum. The Bell layer and the complete single-particle analyzer are
\begin{align}U_{\mathrm{Bell}}&=(U_{\mathrm{BS}})_{b_0b_2}(U_{\mathrm{BS}})_{b_1b_3},\nonumber\\ U_{\mathrm{an}}&=\left(U_{\mathrm h}^{(N)}\oplus U_{\mathrm h}^{(N)}\right)U_{\mathrm{Bell}}.\label{eq:full_analyzer}\end{align}
The rightmost factor acts first. Thus, the Bell layer couples the two halves of the analyzer, after which the operators there act independently. The network contains only passive beam splitters and therefore conserves total photon number. The interferometer is fixed throughout; we do not optimize over other circuits. The $M=8$ instance is shown in Fig.~\ref{fig:analyzer-decomposition}.

\begin{figure}[t]
\centering
\begin{tikzpicture}
\node[inner sep=0] (img) {\includegraphics[height=0.185\textheight]{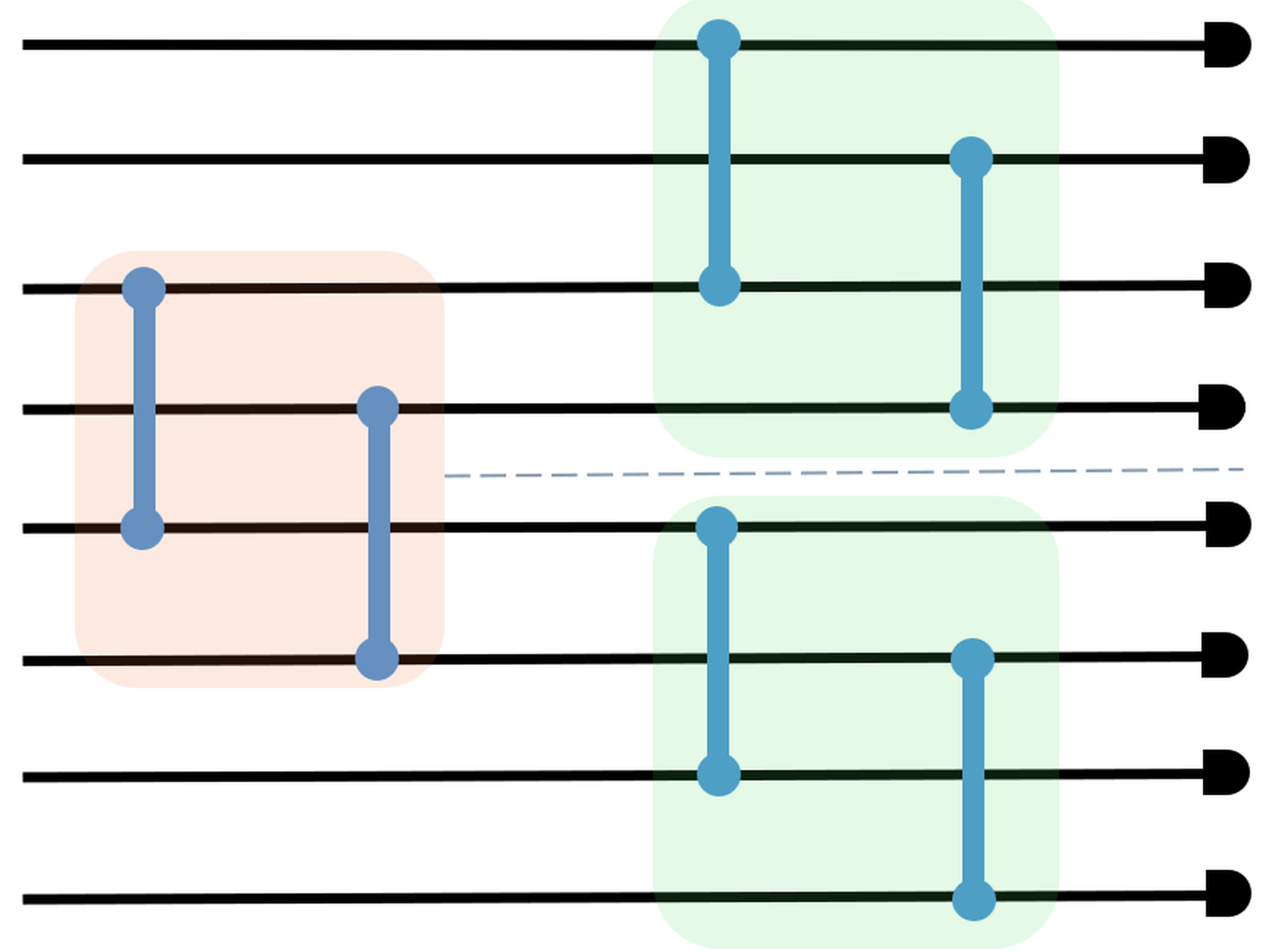}};
\begin{scope}[shift={(img.south west)},x={($(img.south east)-(img.south west)$)},y={($(img.north west)-(img.south west)$)}]
\node[font=\tiny] at (0.200,1.045) {Bell layer};
\node[font=\tiny] at (0.670,1.045) {$U_{\mathrm h}^{(1)}\oplus U_{\mathrm h}^{(1)}$};
\node[font=\tiny] at (0.910,1.045) {PNRDs};
\draw[decorate,decoration={brace,amplitude=4pt,mirror},thin](-0.105,0.975) -- (-0.105,0.810);
\draw[decorate,decoration={brace,amplitude=4pt,mirror},thin](-0.105,0.680) -- (-0.105,0.272);
\draw[decorate,decoration={brace,amplitude=4pt,mirror},thin](-0.105,0.170) -- (-0.105,0.010);
\node[font=\scriptsize,anchor=east] at (-0.145,0.862) {$\ket{\gamma_1}$};
\node[font=\scriptsize,anchor=east] at (-0.145,0.476) {$\ket{\mathcal B_\ell}$};
\node[font=\scriptsize,anchor=east] at (-0.145,0.090) {$\ket{\gamma_1}$};
\foreach \yy/\lab in {0.953/$0$,0.833/$1$,0.682/$2=b_0$,0.560/$3=b_1$,0.441/$4=b_2$,0.313/$5=b_3$,0.187/$6$,0.061/$7$}{
\node[font=\scriptsize,anchor=east,fill=white,inner sep=0.6pt] at (0.043,\yy) {\lab};
}
\end{scope}
\end{tikzpicture}
\caption{Optical analyzer for the $M=8$ case with two ancillary $\ket{\gamma_1}$ blocks. The Bell layer $U_{\mathrm{Bell}}$ (orange box) is followed by the two half analyzers $U_{\mathrm h}^{(1)}$ (green boxes) and photon-number-resolving detectors. The dashed line marks the global $A\mid B$ cut.}
\label{fig:analyzer-decomposition}
\end{figure}

For a mode transformation $U$, let $\Gamma(U)$ denote the corresponding number-preserving transformation in the Fock space. Suppressing the ancillary factors and writing only the two Bell modes in each half, the Bell layer gives
\begin{align}
\Gamma(U_{\mathrm{Bell}})\ket{\mathcal B_{\psi^+}}
&=\frac{\mathrm{i}}{\sqrt2}\left(\ket{1,1}_A\ket{0,0}_B+\ket{0,0}_A\ket{1,1}_B\right),\nonumber\\
\Gamma(U_{\mathrm{Bell}})\ket{\mathcal B_{\psi^-}}
&=\frac{1}{\sqrt2}\left(\ket{1,0}_A\ket{0,1}_B-\ket{0,1}_A\ket{1,0}_B\right),\nonumber\\
\Gamma(U_{\mathrm{Bell}})\ket{\mathcal B_{\phi^\pm}}
&=\frac{\mathrm{i}}{2}\left(\ket{2,0}\pm\ket{0,2}\right)_A\ket{0,0}_B\nonumber\\[-1mm]
&\quad+\frac{\mathrm{i}}{2}\ket{0,0}_A\left(\ket{2,0}\pm\ket{0,2}\right)_B.
\label{eq:bell_layer_action}
\end{align}
The $\psi^-$ Bell-layer output has one photon in each half of the analyzer. For $\psi^+$ and $\phi^\pm$, both photons occupy half $A$ or half $B$.

\subsection{Photon-counting discrimination objective}
Including the two photons from the Bell states, the input therefore contains
\[Q=M-2\]  photons in total. A PNRD record is an occupation vector
\[\mathbf n=(n_0,\ldots,n_{M-1}),\] where $n_j$ is the nonnegative integer photon number detected in output mode $j$. The allowed fixed-$Q$ records form the set
\[\Omega_{M,Q}=\left\{\mathbf n:\sum_{j=0}^{M-1}n_j=Q\right\}.\]

For Bell label $\ell$, the output state and the amplitude of record $\mathbf n$ are
\begin{align*}
\ket{\Psi_\ell^{\mathrm{out}}}
&=\Gamma(U_{\mathrm{an}})\ket{\Psi_\ell^{\mathrm{in}}}
=\sum_{\mathbf n\in\Omega_{M,Q}}c_\ell(\mathbf n)\ket{\mathbf n},\\
c_\ell(\mathbf n)
&=\braket{\mathbf n}{\Psi_\ell^{\mathrm{out}}}
=\bra{\mathbf n}\Gamma(U_{\mathrm{an}})\ket{\Psi_\ell^{\mathrm{in}}}.
\end{align*}
The corresponding detection probability is $|c_\ell(\mathbf n)|^2$.

Unambiguous state discrimination (USD) allows for an inconclusive outcome but excludes an incorrect conclusive assignment~\cite{Ivanovic1987,Dieks1988,Peres1988,Chefles1998}. The Bell states are mutually orthogonal, so here, the inconclusive outcome arises from the restriction to the class of measurements rather than from non-orthogonality of the inputs. Let $\mu_\ell$ denote the probability of detector records that occur for Bell input $\ell$ and for no other Bell state.
\begin{equation}
\mu_\ell
=\sum_{\substack{\mathbf n\in\Omega_{M,Q}\\
c_\ell(\mathbf n)\neq0,\; c_k(\mathbf n)=0\;\forall\,k\neq\ell}}
|c_\ell(\mathbf n)|^2,
\label{eq:unique_mass}
\end{equation}
where $k$ runs over the other Bell labels. For equal Bell priors,
\begin{equation}
\Pusd
=\frac14\sum_\ell \mu_\ell,
\label{eq:usd_success}
\end{equation}
where the sum runs over the four Bell labels. For this fixed PNRD measurement, assigning every uniquely supported record to its Bell label gives the optimal unambiguous assignment. Indeed, any conclusive assignment to $\ell$ must use a record for which all competing amplitudes vanish, so including every such record maximizes the conclusive probability for the fixed measurement.

The zeros in Eq.~\eqref{eq:unique_mass} refer to exact amplitudes. Throughout this work, we assume indistinguishable photons, lossless passive optics, ideal ancillary preparation, and ideal photon-number-resolving detectors.

\section{TTN representation and symmetry reduction}\label{sec:ttn}

We represent each Bell-labeled output state by a balanced, photon-number-resolved TTN. The tree follows the recursive structure of the fixed analyzer, while photon number is carried explicitly on every virtual bond. The two branches meeting at the global root correspond to the physical analyzer halves $A$ and $B$. Figure~\ref{fig:ttn-us-psi} illustrates the representation; the TTN is a representation of the optical state, not an additional optical network.

\begin{figure}[t]
\centering
\begin{tikzpicture}
\node[inner sep=0] (img) {\includegraphics[height=0.160\textheight]{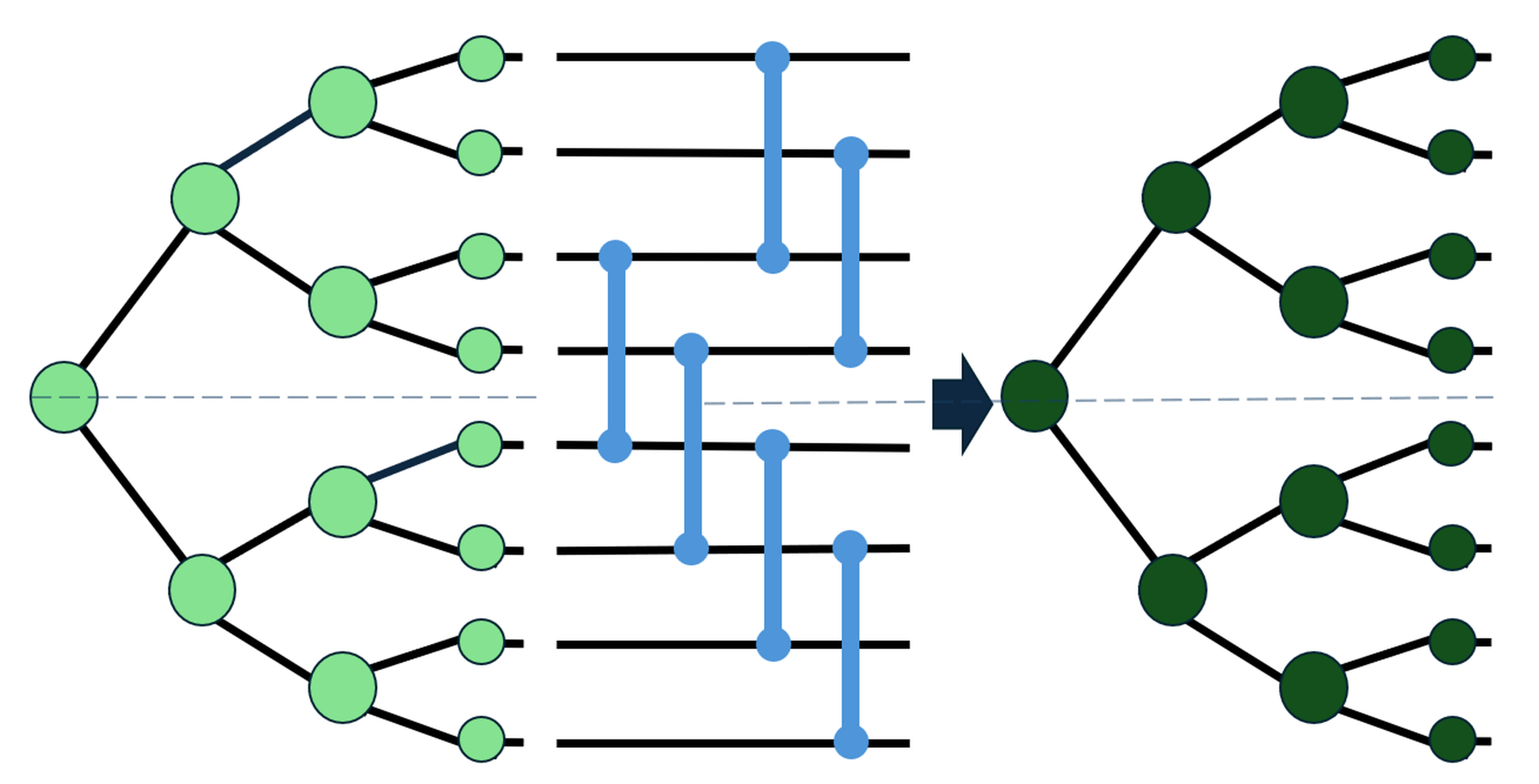}};
\begin{scope}[shift={(img.south west)},x={($(img.south east)-(img.south west)$)},y={($(img.north west)-(img.south west)$)}]
\draw[decorate,decoration={brace,mirror,amplitude=3pt},thin]
(0.010,-0.015) -- (0.340,-0.015);
\node[font=\scriptsize] at (0.175,-0.105)
{$\ket{\Psi_\ell^{\mathrm{in}}}$};
\draw[decorate,decoration={brace,mirror,amplitude=3pt},thin]
(0.355,-0.015) -- (0.605,-0.015);
\node[font=\scriptsize] at (0.480,-0.105)
{$\Gamma(U_{\mathrm{an}})$};
\draw[decorate,decoration={brace,mirror,amplitude=3pt},thin]
(0.670,-0.015) -- (0.995,-0.015);
\node[font=\scriptsize] at (0.833,-0.105)
{$\ket{\Psi_\ell^{\mathrm{out}}}$};
\end{scope}
\end{tikzpicture}
\caption{TTN representation of the optical-state evolution through the analyzer. The input state $\ket{\Psi_\ell^{\mathrm{in}}}$ is evolved by $\Gamma(U_{\mathrm{an}})$ to the output state $\ket{\Psi_\ell^{\mathrm{out}}}$. Each leaf of the input and output TTNs corresponds to one optical mode, and the dashed line denotes the global $A\mid B$ cut.}
\label{fig:ttn-us-psi}
\end{figure}

\subsection{Photon-number-resolved tree representation}
\label{subsec:ttn-boundary}

For $M=2^{N+2}$, we use a balanced binary tree with one optical mode at each leaf~\cite{ShiDuanVidal2006,Seitz2023simulatingquantum}. For a non-root internal node $v$, its ordered children are denoted by $v_{\mathrm L}$ and $v_{\mathrm R}$. These are local tree labels, distinct from the global analyzer halves $A$ and $B$.

We use the $U(1)$ symmetry associated with photon-number conservation. Each virtual bond is resolved into sectors labeled by the photon number $q$ in the physical modes below that bond~\cite{SinghPfeiferVidal2010,SinghPfeiferVidal2011}. Since the four Bell-conditioned states are represented and evolved separately, the dimension of a given photon-number block can depend on the Bell label. We denote \(r_{\ell,v}^{(q)}\) as the dimension of the represented $q$-photon block on the bond leaving a non-root node $v$.

For the detector occupation vector $\mathbf n$ defined in Sec.~\ref{sec:model}, $\mathbf n_v$ denotes its restriction, in leaf order, to the physical modes in the subtree rooted at $v$. The subtree records containing $q$ photons form
\[
\Omega_{v,q}
=
\left\{
\mathbf n_v:
\sum_{j\in v}n_j=q
\right\},
\]
where $j\in v$ means that physical mode $j$ is a leaf below node $v$.

For a non-root internal node $v$, write $q_v$ for the photon number in its subtree and $q_{\mathrm L}$ and $q_{\mathrm R}$ for the photon numbers in its two child subtrees. Photon-number conservation requires $q_v=q_{\mathrm L}+q_{\mathrm R}$. At fixed compatible photon numbers, the tensor at $v$ maps the two child blocks into the parent block. We label their basis indices by $\alpha=1,\ldots,r_{\ell,v_{\mathrm L}}^{(q_{\mathrm L})}$, $\beta=1,\ldots,r_{\ell,v_{\mathrm R}}^{(q_{\mathrm R})}$, and $\omega=1,\ldots,r_{\ell,v}^{(q_v)}$. Let $\mathsf T_v^{(\ell)}$ denote the TTN tensor at node $v$, with components
\[
\bigl[\mathsf T_v^{(\ell)}\bigr]^{(q_v,\omega)}
_{(q_{\mathrm L},\alpha)(q_{\mathrm R},\beta)}.
\]
At the lowest internal level the children are physical modes $i$ and $j$, so $q_v=n_i+n_j$.

For a fixed subtree record $\mathbf n_v\in\Omega_{v,q_v}$, contracting all tensors and physical occupations below $v$ while leaving the outgoing virtual bond open gives the boundary vector
\[
\mathbf f_{\ell,v}^{(q_v)}(\mathbf n_v)
\in
\mathbb C^{r_{\ell,v}^{(q_v)}}.
\]
It contains the amplitude information from that subtree record passed to the remainder of the TTN. Its component along basis state $\omega$ is denoted by $f_{\ell,v}^{(q_v,\omega)}(\mathbf n_v)$.

At the lowest internal level, where $v$ joins physical modes $i$ and $j$,
\[
f_{\ell,v}^{(q_v,\omega)}(n_i,n_j)
=
\bigl[\mathsf T_v^{(\ell)}\bigr]^{(q_v,\omega)}_{n_i n_j},
\qquad q_v=n_i+n_j .
\]
At a higher non-root node, a specified detector record fixes $q_{\mathrm L}=\sum_{j\in v_{\mathrm L}}n_j$ and $q_{\mathrm R}=\sum_{j\in v_{\mathrm R}}n_j$. The boundary vector is then
\begin{align}
f_{\ell,v}^{(q_v,\omega)}(\mathbf n_v)
={}&
\sum_{\alpha,\beta}
\bigl[\mathsf T_v^{(\ell)}\bigr]^{(q_v,\omega)}
_{(q_{\mathrm L},\alpha)(q_{\mathrm R},\beta)}
\nonumber\\
&\times
f_{\ell,v_{\mathrm L}}^{(q_{\mathrm L},\alpha)}(\mathbf n_{v_{\mathrm L}})
 f_{\ell,v_{\mathrm R}}^{(q_{\mathrm R},\beta)}(\mathbf n_{v_{\mathrm R}}).
\label{eq:internal_boundary_amplitude}
\end{align}
Applied to the evolved output TTN, this recursion and the root contraction of Sec.~\ref{subsec:root-sectors} give any requested coefficient $c_\ell(\mathbf n)$ without constructing the complete output vector.

\subsection{Symmetry-preserving optical evolution}

For the resources defined in Sec.~\ref{sec:model}, each TTN leaf uses a local Fock basis containing every occupation reachable during the analyzer evolution. Each ancillary half contains $M/2-2$ photons, and the Bell layer can place at most two additional photons in either half. Thus no analyzer half contains more than $M/2$ photons. The subsequent half analyzer is passive and preserves this bound, so a complete local basis for the physically reachable occupations are
\[
\left\{\ket0,\ket1,\ldots,\ket{d_{\mathrm{loc}}-1}\right\},
\qquad
d_{\mathrm{loc}}=\frac M2+1 .
\]
For $M=8,16,32$ this gives $d_{\mathrm{loc}}=5,9,17$, respectively. Thus the local basis is not a truncation of the physically accessible Fock space in the calculations reported here.

Each balanced beam splitter preserves the total photon number of the two modes it acts on. For every physically reachable pair photon number, the local basis therefore contains the complete fixed-number block, and the corresponding beam-splitter block is unitary. Gates acting on sibling modes are absorbed directly into the associated tensor. For nonsibling modes, the physical legs are routed through the tree, the same two-mode optical gate is applied, and the affected path is refactorized independently within each photon-number sector. In exact arithmetic, retaining every nonzero singular value changes only the TTN representation and leaves the optical state unchanged. Each sectorwise refactorization uses a one-sided singular-value decomposition (SVD) gauge. The singular values are absorbed into a single factor, and the other factor is kept isometric; the orientation follows the local routed split. The evolved TTN is therefore not assumed to be globally root-canonical. After the analyzer has been applied, $\mathsf T_v^{(\ell)}$ denotes the tensors of the evolved output TTN used below.

\subsection{Photon-number-resolved environments and root sectors}
\label{subsec:root-sectors}

In this section, we describe the objects called environments. They play a crucial role in finding probabilities of each photon number sector for each Bell state, as well as the unique probabilities in each sectors as will be described in the subsequent sections. The key observation is that the higher order environments can be constructed from the lower order ones recursively through the TTN, so the above mentioned probabilities can be found by means of environments without recalling all photon detection patterns separately.

The record-resolved boundary vectors can be combined into a photon-number-resolved Gram environment. For a non-root node $v$, Bell label $\ell$, and subtree photon number $q$, define
\begin{equation}
\Eenv_{\ell,v}(q)
=
\sum_{\mathbf n_v\in\Omega_{v,q}}
\mathbf f_{\ell,v}^{(q)}(\mathbf n_v)
\mathbf f_{\ell,v}^{(q)}(\mathbf n_v)^\dagger.
\label{eq:general_environment_u1}
\end{equation}
It is a $r_{\ell,v}^{(q)}\times r_{\ell,v}^{(q)}$ Hermitian positive-semidefinite matrix whose dimension is fixed by the represented bond block, not by the number of records in $\Omega_{v,q}$.
Its rank is
\begin{equation}
\operatorname{rank}\Eenv_{\ell,v}(q)
=
\dim\operatorname{span}
\left\{
\mathbf f_{\ell,v}^{(q)}(\mathbf n_v):
\mathbf n_v\in\Omega_{v,q}
\right\},
\label{eq:environment_rank_span}
\end{equation}
so a rank-one environment means that all nonzero boundary vectors in that photon-number sector lie along one virtual direction. Since the environment is positive semidefinite, a sector is absent exactly when $\tr\Eenv_{\ell,v}(q)=0$ (where $\tr$ denotes the matrix trace). These environments propagate recursively through the tree by the corresponding photon-number-resolved double-layer contraction. The explicit gate update, sectorwise refactorization, and environment recursion are given in Appendix~\ref{app:ttn_updates}.

At the global root, write $\mathbf n=(\mathbf n_A,\mathbf n_B)$ for the restrictions of the detector record to the two physical halves. Their photon numbers are $q_A=\sum_{j=0}^{M/2-1}n_j$ and $q_B=\sum_{j=M/2}^{M-1}n_j$, with $q_A+q_B=Q$. Let $Q_X$ denote the fixed ancillary photon number in resource $\ket{\mathcal A_X}$ for $X\in\{A,B\}$; thus $Q=Q_A+Q_B+2$. From the Bell-layer action in Eq.~\eqref{eq:bell_layer_action}, followed by independent number-preserving half analyzers, the only possible root photon numbers are
\begin{equation}
q_A\in\{Q_A,Q_A+1,Q_A+2\},
\qquad
q_B=Q-q_A .
\label{eq:root_charge_sectors}
\end{equation}
The central sector $q_A=Q_A+1$ is populated only by $\psi^-$, for which one Bell photon enters each half. The two outer sectors contain the branches of $\psi^+$ and $\phi^\pm$, in which both Bell state photons occupy the same half of the analyzer.

Let $v_A$ and $v_B$ denote the two child nodes entering the global root. For $X\in\{A,B\}$, use the shorthand
\[
\mathbf f_{\ell,X}^{(q)}\equiv\mathbf f_{\ell,v_X}^{(q)},
\qquad
\Eenv_{\ell,X}(q)\equiv\Eenv_{\ell,v_X}(q),
\]
\[
\Omega_{X,q}\equiv\Omega_{v_X,q},
\qquad
r_{\ell,X}^{(q)}\equiv r_{\ell,v_X}^{(q)}.
\]
Let $\mathsf T_{\mathrm{root}}^{(\ell)}$ denote the global-root tensor, with its one-dimensional upward leg suppressed. In root sector $q_A$, define the coupling matrix
\begin{equation}
[\Rt_\ell(q_A)]_{\alpha\beta}
=
[\mathsf T_{\mathrm{root}}^{(\ell)}]
_{(q_A,\alpha)(q_B,\beta)}.
\label{eq:root_coupling}
\end{equation}
It is a $r_{\ell,A}^{(q_A)}\times r_{\ell,B}^{(q_B)}$ matrix. The Fock amplitude of a detector record in this sector is
\begin{equation}
c_\ell(\mathbf n_A,\mathbf n_B)
=
\mathbf f_{\ell,A}^{(q_A)}(\mathbf n_A)^T
\Rt_\ell(q_A)
\mathbf f_{\ell,B}^{(q_B)}(\mathbf n_B).
\label{eq:root_amplitude_bilinear}
\end{equation}
The transpose is intentional. The two boundary vectors contain ket coefficients, and the root tensor contracts them bilinearly without complex conjugation.

Summing $|c_\ell|^2$ over all detector records in a fixed root sector gives
\begin{equation}
w_\ell(q_A)
=
\tr\!\left[
\Eenv_{\ell,A}(q_A)^T
\Rt_\ell(q_A)
\Eenv_{\ell,B}(q_B)
\Rt_\ell(q_A)^\dagger
\right].
\label{eq:sector_mass_trace}
\end{equation}
This is the sum of $|c_\ell|^2$ over the sector, so $w_\ell(q_A)\ge0$. Since the root photon-number sectors partition the normalized fixed-$Q$ output state,
\begin{equation}
\sum_{q_A=0}^{Q}w_\ell(q_A)=1 .
\label{eq:root_sector_normalization}
\end{equation}

Nonzero Gram environments in both halves do not, by themselves, guarantee a nonzero global sector, because the root contraction can still vanish on their supports. We therefore call Bell label $\ell$ active in sector $q_A$ only when $w_\ell(q_A)>0$. This activity condition is used in Sec.~\ref{sec:screen}.

\subsection{Even--odd mode symmetry and modulo-four support separation}
\label{subsec:mod4}

The tensor-product form $U_{\mathrm h}^{(N)}=U_{\mathrm{BS}}^{\otimes N}\otimes\mathbb I_2$ shows that the final binary mode index is unchanged by the half analyzer. Equivalently, after grouping even- and odd-index physical modes, the half analyzer consists of two identical $U_{\mathrm{BS}}^{\otimes N}$ blocks and does not mix the two mode-index classes. The Bell layer also couples only equal-index-parity modes through $(b_0,b_2)$ and $(b_1,b_3)$. The complete analyzer therefore separately preserves the photon numbers in the even- and odd-index mode classes.

For an output occupation record $\mathbf n$, define the alternating residue
\begin{equation}
\nu(\mathbf n)
=
\left[
\sum_{j=0}^{M-1}(-1)^j n_j
\right]_4
\in\{0,1,2,3\},
\label{eq:alternating_residue}
\end{equation}
where $[x]_4$ denotes the representative of $x$ modulo four in $\{0,1,2,3\}$. The preceding even--odd mode separation implies that the analyzer conserves $\nu$. Although the two integer populations are conserved separately, they are not fixed across the nonzero Fock branches of the coherent ancillary resources used here; the modulo-four difference is the branch-independent quantity common to those branches.

Assume that every nonzero Fock branch of the ancillary input has the same alternating residue $\nu_{\mathrm{anc}}$; the Bell modes are unoccupied in this ancillary state. Since $M/2$ is even, $b_0,b_2$ are even-indexed and $b_1,b_3$ are odd-indexed. Each component of $\psi^\pm$ adds zero residue. Each component of $\phi^\pm$ adds $\pm2\equiv2\pmod4$. Conservation of $\nu$ gives the exact necessary support conditions
\begin{equation}
\begin{aligned}
c_{\psi^\pm}(\mathbf n)\neq0
&\;\Longrightarrow\;
\nu(\mathbf n)=\nu_{\mathrm{anc}},\\
c_{\phi^\pm}(\mathbf n)\neq0
&\;\Longrightarrow\;
\nu(\mathbf n)=[\nu_{\mathrm{anc}}+2]_4.
\end{aligned}
\label{eq:label-residue-classes}
\end{equation}
These conditions are not sufficient, since destructive interference can still produce zero amplitudes within an allowed residue class.

At fixed total photon number $Q$, the two residue classes in Eq.~\eqref{eq:label-residue-classes} correspond to opposite parity of the total population in the even-index modes, equivalently in the odd-index modes. This is the same even/odd photon-counting structure used in the Bell analyzer of Ewert and van Loock~\cite{EwertvanLoock2014}.

Combining Eq.~\eqref{eq:label-residue-classes} with the root photon-number sectors of Eq.~\eqref{eq:root_charge_sectors} removes overlap between the $\psi$ and $\phi$ labels for a definite ancillary residue. The central sector contains only $\psi^-$. In the outer sectors, the residue separates $\psi^+$ from both $\phi$ labels. Every nonzero detector record supported by $\psi^+$ or $\psi^-$ is therefore excluded for the other three Bell labels. By normalization,
\begin{equation}
\mu_{\psi^+}=\mu_{\psi^-}=1 .
\label{eq:psi_unique_exact}
\end{equation}
For the resources used here, $\gamma_1$ has residue $2$, whereas each $\gamma_{\tau\ge2}$ has residue $0$; the recursive, $\gamma_1$-product, asymmetric, and mixed-block halves therefore have definite ancillary residue $2$. These selection rules do not distinguish $\phi^+$ from $\phi^-$; their remaining support ambiguity is treated in Sec.~\ref{sec:screen}.

\section{Directional support criterion and target-unique probability}
\label{sec:screen}

For the resources considered here, the photon number at the root and modulo-four selection rules of Sec.~\ref{sec:ttn} leave only the $\phi^+$--$\phi^-$ support overlap. Resolving it requires information about detector records from both halves of the analyzer. To do that we retain so-called zero or projective boundary directions for the rivals and Gram matrices for the target state resolved according to the records. 

In the following subsections we describe how we construct the records, that the Gram matrices we consider are rank one which implies that it is enough to check vanishing amplitudes of the rival state in each photon number sector in halves of the analyzer separately. Finally we describe the accumulation of the unique probability masses for the target Bell state and discuss the completness of the procedure. 

\subsection{Active rivals and directional half-tree records}

Fix a root sector $q_A$, with $q_B=Q-q_A$, and choose an active target Bell label $t$, so $w_t(q_A)>0$. Let $u_t(q_A)$ denote the target-unique probability mass in this sector. We denote by $\mathcal K_t(q_A)$ the active rivals that remain after the exact root-photon-number and modulo-four selection rules. In the unresolved Bell sectors, the only retained rival of $\phi^+$ is $\phi^-$, and conversely. If $\mathcal K_t(q_A)=\varnothing$, no rival remains and the complete target-sector mass is unique,
\begin{equation}
 u_t(q_A)=w_t(q_A).
\label{eq:empty_rival_sector_mass}
\end{equation}

At this stage, the target and its rivals play different roles. For a rival, only whether the final amplitude vanishes is required. Because the contraction above a subtree is linear in its boundary vector, multiplying a nonzero rival boundary vector by a nonzero scalar cannot alter any later zero test. A binary zero/nonzero flag is nevertheless insufficient, because two noncollinear nonzero boundary vectors can couple differently to the remainder of the tree. For a detector record $\mathbf n_v$ below subtree $v$, define the rival record
\begin{equation}
\zeta_v^{(k)}(\mathbf n_v)
=
\begin{cases}
0,
&
\mathbf f_{k,v}^{(q_v)}(\mathbf n_v)=0,
\\[1mm]
\bigl[\mathbf f_{k,v}^{(q_v)}(\mathbf n_v)\bigr],
&
\mathbf f_{k,v}^{(q_v)}(\mathbf n_v)\neq0,
\end{cases}
\label{eq:rival-directional-class}
\end{equation}
where
\[
[\mathbf f]=\{\xi\mathbf f:\xi\in\mathbb C,\ \xi\neq0\}
\]
denotes the projective class of a nonzero vector.

For a retained-rival set $\mathcal K_t(q_A)$, we define the directional record of subtree $v$
\begin{equation}
\mathfrak d_v(\mathbf n_v)
=
\left(
 q_v,
 \bigl(\zeta_v^{(k)}(\mathbf n_v)\bigr)_{k\in\mathcal K_t(q_A)}
\right).
\label{eq:directional_record}
\end{equation}
The photon number $q_v$ is retained so that records from different photon-number sectors are never merged.

The rival part of a directional record is propagated through the same photon-number-resolved boundary recursion as in Sec.~\ref{subsec:ttn-boundary}, with each nonzero boundary vector retained only through its projective class.
For a directional record $\mathfrak d$ in photon-number sector $q$, define the target Gram matrix resolved by the record
\begin{equation}
\mathsf D_{t,v}(\mathfrak d)
=
\sum_{\substack{
\mathbf n_v\in\Omega_{v,q}\\
\mathfrak d_v(\mathbf n_v)=\mathfrak d
}}
\mathbf f_{t,v}^{(q)}(\mathbf n_v)
\mathbf f_{t,v}^{(q)}(\mathbf n_v)^\dagger.
\label{eq:record_target_gram}
\end{equation}
Summing over all directional records of a fixed photon number recovers the ordinary target environment,
\begin{equation}
\sum_{\mathfrak d:\,q_v=q}
\mathsf D_{t,v}(\mathfrak d)
=
\Eenv_{t,v}(q).
\label{eq:record_target_gram_sum}
\end{equation}
The matrices $\mathsf D_{t,v}$ propagate by the same double-layer contraction as the Gram environments of Sec.~\ref{subsec:root-sectors}. The rival transfer determines the parent directional record, and target Gram contributions yielding the same record are added.

\subsection{Rank-one half-root support and the rival-zero criterion}
\label{subsec:rankone-support}

Fix an active overlap sector and let $\ell\in\{t\}\cup\mathcal K_t(q_A)$ denote the target or a retained rival. For the factorized resources reported in Sec.~\ref{sec:results}, projection onto an outer root-photon-number sector selects one Bell-layer branch of Eq.~\eqref{eq:bell_layer_action}. The projected physical state factorizes across $A\mid B$, and the two half analyzers act independently. The zero criterion we define below is nevertheless imposed directly on the half-root Gram rank of each retained rival, not on the represented TTN sector dimension.

Let $q_X$, with $X\in\{A,B\}$,  be the corresponding half-root photon number. At the half roots, we also write $\zeta_X^{(k)}\equiv\zeta_{v_X}^{(k)}$. When the supported half-root space is one-dimensional,
\begin{equation}
\begin{aligned}
\Eenv_{\ell,X}(q_X)
&=
\eta_{\ell,X}(q_X)\,
\mathbf e_{\ell,X}(q_X)
\mathbf e_{\ell,X}(q_X)^\dagger,
\\
\eta_{\ell,X}(q_X)&>0,
\qquad
\|\mathbf e_{\ell,X}(q_X)\|=1,
\end{aligned}
\label{eq:rankone_environment_condition}
\end{equation}
with $\eta_{\ell,X}(q_X)=\tr\Eenv_{\ell,X}(q_X)$. Rank one here refers to the rank of the Gram matrix. It does not require the represented sector dimension $r_{\ell,X}^{(q_X)}$, or any internal TTN bond dimension, to equal one.

For any $\mathbf h\perp\mathbf e_{\ell,X}(q_X)$, the Gram matrix definition implies
\begin{equation}
0
=
\mathbf h^\dagger\Eenv_{\ell,X}(q_X)\mathbf h
=
\sum_{\mathbf n_X\in\Omega_{X,q_X}}
\left|
\mathbf h^\dagger
\mathbf f_{\ell,X}^{(q_X)}(\mathbf n_X)
\right|^2.
\label{eq:rankone_gram_argument}
\end{equation}
Every term of the sum is nonnegative, so every pattern-resolved boundary vector lies on the supported line,
\begin{equation}
\mathbf f_{\ell,X}^{(q_X)}(\mathbf n_X)
=
\xi_{\ell,X}(\mathbf n_X)\,
\mathbf e_{\ell,X}(q_X),
\qquad
\xi_{\ell,X}(\mathbf n_X)\in\mathbb C.
\label{eq:rankone_pattern_line}
\end{equation}
Define the scalar factor at the root
\begin{equation}
G_\ell(q_A)
=
\mathbf e_{\ell,A}(q_A)^T
\Rt_\ell(q_A)
\mathbf e_{\ell,B}(q_B).
\label{eq:rankone_root_factor}
\end{equation}
The superscript $T$ denotes a transposition. Substitution into the root amplitude and sector-mass formulas gives
\begin{equation}
\begin{aligned}
c_\ell(\mathbf n_A,\mathbf n_B)
&=
\xi_{\ell,A}(\mathbf n_A)
\xi_{\ell,B}(\mathbf n_B)
G_\ell(q_A),
\\
w_\ell(q_A)
&=
|G_\ell(q_A)|^2
\eta_{\ell,A}(q_A)
\eta_{\ell,B}(q_B).
\end{aligned}
\label{eq:rankone_amp}
\end{equation}

For the zero test below, the rank-one condition is required only for each retained rival. The target need not have rank-one half-root support for the exact probability assembly of Sec.~\ref{subsec:target-unique}; rank-one target support simplifies that assembly to a scalar form for the reported sectors. For a retained rival $k\in\mathcal K_t(q_A)$, being active means $w_k(q_A)>0$. Since $\eta_{k,A}(q_A)$ and $\eta_{k,B}(q_B)$ in Eq.~\eqref{eq:rankone_amp} are positive, $G_k(q_A)\neq0$ follows and is not an additional hypothesis and Eq.~\eqref{eq:rankone_pattern_line} gives $\mathbf f_{k,X}^{(q_X)}(\mathbf n_X)=0$ exactly when $\xi_{k,X}(\mathbf n_X)=0$, while Eq.~\eqref{eq:rival-directional-class} gives $\mathbf f_{k,X}^{(q_X)}(\mathbf n_X)=0$ exactly when $\zeta_X^{(k)}(\mathbf n_X)=0$. Hence,
\begin{equation}
c_k(\mathbf n_A,\mathbf n_B)=0
\quad\Longleftrightarrow\quad
\zeta_A^{(k)}(\mathbf n_A)
\ \text{or}\ 
\zeta_B^{(k)}(\mathbf n_B)=0.
\label{eq:half_zero_completeness}
\end{equation}
Thus, in an active rank-one rival sector, two nonzero rival half-boundary records cannot cancel at the global root.

The rank-one condition is essential for this guarantee. In a higher-dimensional supported space, nonzero half-boundary vectors can still have a vanishing root contraction. Such sectors require joint left--right information or an additional structural condition excluding root cancellations. The rank-one hypotheses above are exact mathematical conditions; the finite-precision rank diagnostics used in the reported calculations serve only as applicability checks.

\subsection{Target-unique mass and completeness}
\label{subsec:target-unique}

Consider first the single-rival case relevant to the unresolved Bell sectors, with retained rival $k$. At the half root $v_X$, write $\mathsf D_{t,X}\equiv\mathsf D_{t,v_X}$. The target records for which the rival vanishes on half $X\in\{A,B\}$ define the zero-filtered target Gram matrix
\begin{equation}
F_{t,X}^{(k,0)}(q)
=
\sum_{\substack{
\mathfrak d_X=(q,\ldots)\\
\zeta_X^{(k)}=0
}}
\mathsf D_{t,X}(\mathfrak d_X).
\label{eq:target-half-zero-gram}
\end{equation}
No rank condition on the target is required in this definition. By Eq.~\eqref{eq:half_zero_completeness}, the target record is unique against $k$ exactly when the rival vanishes on half $A$, on half $B$, or on both. Inclusion--exclusion over these two events therefore gives
\begin{align}
u_t(q_A)
={}&
\tr\!\left[
F_{t,A}^{(k,0)}(q_A)^T
\Rt_t(q_A)
\Eenv_{t,B}(q_B)
\Rt_t(q_A)^\dagger
\right]
\nonumber\\
&+
\tr\!\left[
\Eenv_{t,A}(q_A)^T
\Rt_t(q_A)
F_{t,B}^{(k,0)}(q_B)
\Rt_t(q_A)^\dagger
\right]
\nonumber\\
&-
\tr\!\left[
F_{t,A}^{(k,0)}(q_A)^T
\Rt_t(q_A)
F_{t,B}^{(k,0)}(q_B)
\Rt_t(q_A)^\dagger
\right].
\label{eq:one-rival-sector-mass-general}
\end{align}
Equation~\eqref{eq:one-rival-sector-mass-general} is exact for arbitrary target half-root rank. It uses the same target Gram matrices resolved according to the records already propagated by Eq.~\eqref{eq:record_target_gram}; the rank-one condition enters only through the rival-zero criterion.

For the reported overlap sectors, the target half-root support is also rank-one. Defining
\begin{equation}
z_{t,X}^{(k)}(q)=\tr F_{t,X}^{(k,0)}(q),
\label{eq:target-half-zero-mass}
\end{equation}
$F_{t,X}^{(k,0)}(q)$ is a positive semi-definite partial sum of $\Eenv_{t,X}(q)$, so Eq.~\eqref{eq:rankone_environment_condition} implies
\[
F_{t,X}^{(k,0)}(q)
=
z_{t,X}^{(k)}(q)\,
\mathbf e_{t,X}(q)\mathbf e_{t,X}(q)^\dagger.
\]
Using this relation together with Eq.~\eqref{eq:rankone_amp}, Eq.~\eqref{eq:one-rival-sector-mass-general} reduces to
\begin{equation}
\begin{aligned}
u_t(q_A)
=
|G_t(q_A)|^2
\Big[
&z_{t,A}^{(k)}(q_A)\eta_{t,B}(q_B)
+
\eta_{t,A}(q_A)z_{t,B}^{(k)}(q_B)
\\
&-
z_{t,A}^{(k)}(q_A)z_{t,B}^{(k)}(q_B)
\Big].
\end{aligned}
\label{eq:one-rival-sector-mass}
\end{equation}
This scalar form follows from rank-one target factorization and is the characteristic used for the reported overlap sectors.

For a finite retained-rival set, a target record is unique exactly when every $k\in\mathcal K_t(q_A)$ satisfies the half-zero criterion of Eq.~\eqref{eq:half_zero_completeness}. The following proposition states when the directional construction gives this mass exactly.

\begin{proposition}
\label{prop:directional_exactness}
\emph{[Directional completeness under rank-one rivals.]} Fix an active target $t$ in root sector $q_A$, and let $\mathcal K_t(q_A)$ contain every active rival surviving the exact preliminary selection rules. Assume an exact TTN representation and exact zero and projective-equivalence decisions. If $\mathcal K_t(q_A)=\varnothing$, then $u_t(q_A)=w_t(q_A)$. If $\mathcal K_t(q_A)\neq\varnothing$ and every retained rival has a one-dimensional supported space at both half roots, the directional construction with the record-resolved target Gram matrices gives the exact target-unique sector mass $u_t(q_A)$.
\end{proposition}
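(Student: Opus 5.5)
The plan is to reduce the exactness claim to two facts. First, the directional records propagated through the tree carry exactly the information needed to decide rival vanishing at the half roots. Second, the rank-one hypothesis turns that half-root information into the global zero test of Eq.~\eqref{eq:half_zero_completeness}. The empty-rival case is immediate. Every record in the sector supported by $t$ is, by the definition of $\mathcal K_t(q_A)$ and the selection rules of Eqs.~\eqref{eq:root_charge_sectors} and \eqref{eq:label-residue-classes}, excluded for every other label. Inactive rivals have $w_k(q_A)=0$, hence vanish on every record in the sector. So the unique mass is the full sector mass, $u_t(q_A)=w_t(q_A)$, as in Eq.~\eqref{eq:empty_rival_sector_mass}.

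For $\mathcal K_t(q_A)\neq\varnothing$, I would first prove by induction over the tree, from the leaves upward, that the propagated rival part of $\mathfrak d_v$ is well defined. That is, $\zeta_v^{(k)}(\mathbf n_v)$ is a function of the children's records alone. The step uses Eq.~\eqref{eq:internal_boundary_amplitude}: the parent boundary vector is bilinear in the child vectors. Rescaling $\mathbf f_{k,v_{\mathrm L}}$ and $\mathbf f_{k,v_{\mathrm R}}$ by nonzero scalars rescales $\mathbf f_{k,v}$ by their product, and a zero child vector gives a zero parent. Thus the projective class or zero of the parent depends only on the child classes and photon numbers. Consequently the partition of $\Omega_{v,q}$ by $\mathfrak d_v$ is computed correctly, and adding target Gram contributions with equal parent records yields Eq.~\eqref{eq:record_target_gram}. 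Summing over records gives Eq.~\eqref{eq:record_target_gram_sum} as a consistency check. At the half roots this gives exact zero-filtered matrices $F_{t,X}^{(k,0)}$, each equal to the sum of $\mathbf f_t\mathbf f_t^\dagger$ over those $\mathbf n_X$ with $\mathbf f_{k,X}(\mathbf n_X)=0$.

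Next I would invoke the rank-one argument of Eqs.~\eqref{eq:rankone_gram_argument}--\eqref{eq:rankone_amp} for each retained rival. Activity gives $G_k(q_A)\neq0$, and hence Eq.~\eqref{eq:half_zero_completeness}: the rival vanishes globally on a record exactly when it vanishes on at least one half. The set of target-unique records in the sector is therefore
\[
\bigcap_{k\in\mathcal K_t(q_A)}\bigl(Z_{k,A}\times\Omega_{B,q_B}\,\cup\,\Omega_{A,q_A}\times Z_{k,B}\bigr),
\]
where $Z_{k,X}$ is the half-zero set. Each such union is determined by the directional records.

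The remaining step is the mass assembly. I would write $u_t(q_A)=\sum|c_t|^2$ over this set. Using Eq.~\eqref{eq:root_amplitude_bilinear}, the sum over any product set $S_A\times S_B$ equals the trace of the form in Eq.~\eqref{eq:sector_mass_trace}, with each environment replaced by the corresponding partial Gram matrix. Expanding the intersection of unions into a disjoint union of product sets, or by inclusion--exclusion, expresses $u_t$ through sums of $\mathsf D_{t,X}(\mathfrak d)$ over records with prescribed zero pattern across the rivals. For a single rival this is exactly Eq.~\eqref{eq:one-rival-sector-mass-general}. I expect the main care point to be this multi-rival generalization: one must confirm that the directional record keeps the joint zero pattern of all rivals for each half-record, so the product-set partition refines correctly. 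It does, because $\mathfrak d_v$ stores the whole tuple $(\zeta^{(k)})_k$. I would also confirm that no global cancellation between nonzero rival halves is missed, which is precisely where rank one is used.
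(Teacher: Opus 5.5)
Your proposal is correct and follows essentially the same route as the paper's proof in Appendix~\ref{app:general-records}. Both arguments use the same three steps: well-definedness of the propagated rival records (your bilinearity step is the local form of the paper's linear-functional argument) together with the induction showing that the records partition each $\Omega_{v,q}$; the rank-one rival criterion of Eq.~\eqref{eq:half_zero_completeness}; and assembly of $u_t(q_A)$ from trace forms over product sets. Your partition of each half by the joint rival zero pattern is only an equivalent bookkeeping of the paper's two-level inclusion--exclusion through $F_{t,X}^{\mathrm{nz}}(J;q)$ in Eq.~\eqref{eq:finite_rival_sector_mass}, and for one rival both reduce to Eq.~\eqref{eq:one-rival-sector-mass-general}.
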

The proof is given in Appendix~\ref{app:general-records}.

Thus $u_t(q_A)=0$ for an inactive target, $u_t(q_A)=w_t(q_A)$ when no rival remains, and otherwise the directional construction gives the sector contribution under Proposition~\ref{prop:directional_exactness}; Eq.~\eqref{eq:one-rival-sector-mass-general} gives the exact single-rival assembly, with Eq.~\eqref{eq:one-rival-sector-mass} its rank-one-target specialization for the reported overlap sectors. Since the root-photon-number sectors are mutually disjoint,
\begin{equation}
\mu_t
=
\sum_{q_A=0}^{Q}u_t(q_A).
\label{eq:final_screen_sum}
\end{equation}
This is the Bell-label unique mass entering the equal-prior discrimination probability defined in Sec.~\ref{sec:model}.

\section{Results and validation}
\label{sec:results}

We evaluate the support construction of Secs.~\ref{sec:ttn} and~\ref{sec:screen} for the recursive and $\gamma_1$-product resources described in Sec.~\ref{sec:model}, together with the asymmetric recursive--product and mixed-block compositions defined below. Throughout this section, a hat denotes a finite-precision TTN value obtained with the numerical decision rules specified in Appendix~\ref{app:numerics}. Exact fractions are quoted only for analytical results established here or for exact cited benchmarks whose applicability to the stated analyzer--resource pair is established explicitly.

\subsection{Bell-label and root-sector-resolved results}
\label{subsec:resource-results}

Table~\ref{tab:success-probabilities} summarizes the Bell-label unique masses and aggregate discrimination probabilities. For the definite-residue resources in the table, Sec.~\ref{subsec:mod4} gives $\mu_{\psi^+}=\mu_{\psi^-}=1$, so only the two $\phi$-label masses are shown. For the asymmetric recursive--product instances, half $A$ carries the recursive resource and half $B$ the $\gamma_1$-product resource.

\begin{table*}[t]
\caption{Bell-label unique masses and success probabilities for the analyzer--resource pairs considered here. Hatted quantities are finite-threshold numerical TTN evaluations, and the numerical columns are rounded independently to six decimal places from the unrounded TTN values. Since $\mu_{\psi^+}=\mu_{\psi^-}=1$, only the two $\phi$-label masses are shown. The final column gives an analytical or literature reference value for $P_{\mathrm{USD}}$.}
\label{tab:success-probabilities}
\begin{ruledtabular}
\begin{tabular}{lccccc}
ancillary resource
& $M$
& $\widehat{\mu}_{\phi^+}$
& $\widehat{\mu}_{\phi^-}$
& $\widehat P_{\mathrm{USD}}$
& analytical/reference $P_{\mathrm{USD}}$
\\
\hline
recursive $=$ $\gamma_1$ product
& $8$
& $0.500000$
& $0.500000$
& $0.750000$
& $\frac34$\textsuperscript{a,b}
\\
recursive
& $16$
& $0.750000$
& $0.750000$
& $0.875000$
& $\frac78$\textsuperscript{a}
\\
$\gamma_1$ product
& $16$
& $0.375000$
& $0.750000$
& $0.781250$
& $\frac{25}{32}$\textsuperscript{b}
\\
asymmetric recursive--product
& $16$
& $0.562500$
& $0.750000$
& $0.828125$
& $\frac{53}{64}$\textsuperscript{e}
\\
recursive
& $32$
& $0.875000$
& $0.875000$
& $0.937500$
& $\frac{15}{16}$\textsuperscript{a}
\\
$\gamma_1$ product
& $32$
& $0.273438$
& $0.875000$
& $0.787109$
& $\frac{403}{512}$\textsuperscript{c}
\\
asymmetric recursive--product
& $32$
& $0.574219$
& $0.875000$
& $0.862305$
& $\frac{883}{1024}$\textsuperscript{e}
\\
reflection-covariant mixed-block
& $32$
& $0.687500$
& $0.875000$
& $0.890625$
& $\frac{57}{64}$\textsuperscript{d}
\end{tabular}
\end{ruledtabular}

\begin{minipage}{\textwidth}
\vspace{1mm}
\footnotesize
\textsuperscript{a}
The recursive-family analytical values are from Ewert and van Loock~\cite{EwertvanLoock2014}.
\quad
\textsuperscript{b}
The $\gamma_1$-product aggregate values $3/4$ and $25/32$ are from Ewert and van Loock~\cite{EwertvanLoock2014}.
\quad
\textsuperscript{c}
The value $403/512$ follows from Corollary~9 of the cited arXiv version of Yamazaki \textit{et al.}~\cite{YamazakiIkutaYamamoto2023}. Its applicability to the present endpoint is established analytically in this work.
\quad
\textsuperscript{d}
Supplemental Lemma~A3 of Ewert and van Loock~\cite{EwertvanLoock2014}, with its index set to $2$, gives $57/64$ for $\gamma_1\gamma_2^{\otimes3}$; reflection gives the ordering used here. The displayed $\phi$ masses are numerical TTN values.
\quad
\textsuperscript{e}
The exact aggregate follows from the half-resource composition identity derived in this work.
\end{minipage}
\end{table*}

The recursive rows reproduce their analytical benchmarks. The $M=16$ $\gamma_1$-product row reproduces the Ewert--van Loock aggregate and sign-resolved results~\cite{EwertvanLoock2014}. For $M=32$, the product endpoint is analytically equivalent, for photon-counting support, to the eight-position construction of Yamazaki \textit{et al.}~\cite{YamazakiIkutaYamamoto2023}. Its exact aggregate is therefore $403/512$, agreeing with $\widehat P_{\mathrm{USD}}\approx0.787109$. The hatted sign-resolved masses remain numerical TTN outputs. The endpoint mapping and the half-resource composition proof are given in Appendix~\ref{app:half-additivity}.

The TTN calculations for the asymmetric recursive--product cases do not use the half-resource composition identity. Independently, the reflected-endpoint identity gives the exact aggregates $53/64$ for $M=16$ and $883/1024$ for $M=32$, in agreement with Table~\ref{tab:success-probabilities}. The sign-resolved masses and root-sector contributions remain hatted numerical TTN-based quantities.
\begin{table*}[t]
\caption{TTN and retained-directional-record diagnostics for the reference numerical profile. The quantity $r_{\mathrm{sec}}^{\mathrm{peak}}$ is a represented photon-number-sector dimension, not the rank of the supported half-root Gram space used in Sec.~\ref{sec:screen}. These finite-instance diagnostics depend on the TTN representation, update schedule, and finite-precision profile and are not physical or asymptotic invariants.}
\label{tab:ttn-ranks}
\centering
\small
\begin{ruledtabular}
\begin{tabular}{lcccc}
ancillary resource
& $M$
& $\chi_{\mathrm{peak}}$
& $r_{\mathrm{sec}}^{\mathrm{peak}}$
& $S_{\max}$
\\
\hline
recursive $=$ $\gamma_1$ product & 8 & 4 & 2 & 4\\
recursive & 16 & 12 & 3 & 13\\
$\gamma_1$ product & 16 & 16 & 6 & 22\\
asymmetric recursive--product & 16 & 16 & 6 & 20\\
recursive & 32 & 207 & 31 & 660\\
$\gamma_1$ product & 32 & 512 & 73 & 5462\\
asymmetric recursive--product & 32 & 512 & 72 & 5457\\
reflection-covariant mixed-block & 32 & 322 & 44 & 1146\\
\end{tabular}
\end{ruledtabular}
\end{table*}
For the reflection-covariant $M=32$ mixed-block resource, half $A$ contains $\ket{\gamma_2}^{\otimes3}\otimes\ket{\gamma_1}$ and half $B$ the reflected ordering $\ket{\gamma_1}\otimes\ket{\gamma_2}^{\otimes3}$, with $\gamma_1$ adjacent to the Bell modes on both sides. Here ``mixed-block'' refers only to the different ancillary block sizes; the state itself is pure. The TTN calculation gives $\widehat P_{\mathrm{USD}}\approx0.890625$, numerically agreeing with the exact value $57/64$ obtained from Supplemental Lemma~A3 of Ewert and van Loock~\cite{EwertvanLoock2014}, with $\widehat\mu_{\phi^+}\approx0.687500$ and $\widehat\mu_{\phi^-}\approx0.875000$. No exact rational values are inferred for these individual hatted masses from their decimal representations.

At $M=32$, the populated root-photon-number sectors are $q_A=14,15,16$. The central sector $q_A=15$ isolates $\psi^-$. For the asymmetric orientation defined above, $q_A=14$ carries the product-half outer contribution, whereas $q_A=16$ carries the recursive-half outer contribution. The nontrivial numerical TTN contributions in the two outer sectors are
\begin{equation}
\begin{aligned}
\widehat u_{\phi^+}(14)&\approx0.136719,
&\widehat u_{\phi^-}(14)&\approx0.437500,\\
\widehat u_{\phi^+}(16)&\approx0.437500,
&\widehat u_{\phi^-}(16)&\approx0.437500.
\end{aligned}
\label{eq:asym-root-sector-phi}
\end{equation}
Their sums give $\widehat\mu_{\phi^+}\approx0.574219$ and $\widehat\mu_{\phi^-}\approx0.875000$, reproducing the Bell-label masses in Table~\ref{tab:success-probabilities}. The $\psi^-$ central contribution and the unit total masses of both $\psi$ labels follow from the exact photon-number and residue selection rules of Secs.~\ref{subsec:root-sectors} and~\ref{subsec:mod4}.

The arithmetic-mean identity of Appendix~\ref{app:half-additivity} relates the aggregate asymmetric success probability to the two symmetric endpoint probabilities. The more detailed half-resource decomposition in the same appendix identifies each outer-sector contribution through the corresponding half-resource support quantity. The TTN calculation evaluates these contributions directly, including the unequal $\phi^+$ contributions of the product-half and recursive-half outer sectors in Eq.~\eqref{eq:asym-root-sector-phi}, and the independent half-circuit calculation provides a separate check. For the reflection-covariant mixed-block resource, the Bell-label-resolved contributions in the two reflection-related outer sectors agree within numerical precision.

Beyond the aggregate and root-sector-resolved quantities above, the evolved Bell-labeled TTNs can also be contracted directly against a specified physical PNRD record without constructing the complete detector-support table.

\subsection{Numerical validation}
\label{subsec:numerical-checks}

Table~\ref{tab:ttn-ranks} reports our finite-instance diagnostics for the TTN representation and directional transfer. Here $\chi_{\mathrm{peak}}$ is the largest total TTN bond dimension encountered, $r_{\mathrm{sec}}^{\mathrm{peak}}$ the largest represented photon-number-sector dimension, and $S_{\max}$ the largest retained directional-record count at any tracked half-tree node.

For the reported one-rival overlap sectors, at most two directional records are retained at a half root for a fixed charge. These are the zero class and, when present, one nonzero projective class. This does not mean that only two detector records exist or that the half tree carries only a binary directional state; larger $S_{\max}$ values indicate that many distinct directional records can occur in intermediate subtrees. Likewise, rank-one completeness concerns the retained-rival half-root Gram spaces and does not require $r_{\mathrm{sec}}^{\mathrm{peak}}=1$. These are finite-instance representation diagnostics; no asymptotic bound on the retained-record count or represented sector dimensions is established here.

At $M=8$ the recursive and product resources coincide, and the $Q=6$ fixed-photon-number detector space contains $1716$ records. Complete finite-threshold patternwise classification reproduces the Bell-label masses obtained from the directional construction. With the modulo-four selection rule disabled, some outer target sectors retain two rivals; the finite-rival directional assembly again agrees with an independent permanent evaluation of all $1716$ detector records.

At $M=32$, five $\gamma$-block compositions were compared between the TTN support calculation and a separately implemented half-circuit calculation.

For the reported overlap calculations, the numerical applicability checks include resolved support and projective decisions, the required half-root ranks and retained-rival root factors, threshold separation, and SVD discarded-weight bounds. State normalization and probability-range conditions are checked separately. Analytical benchmark fractions are used only for post-calculation comparison and are not inputs to the TTN support calculation. Passing these checks supports the reported hatted quantities at the stated finite-precision profile; it does not turn numerical zeros into exact support statements. Exact values are asserted only when established analytically or supplied by an exact cited result. Detailed permanent checks, an exhaustive $M=8$ comparison, a threshold sweep, an update-path comparison, and an independent $M=32$ comparison are presented in Appendix~\ref{app:validation}.

\section{Discussion}
\label{sec:discussion}

For the definite-residue resources considered here, the symmetry analysis reduces the detector-support problem to a small set of active Bell-label sectors. Root photon number isolates the central $\psi^-$ sector, while the modulo-four residue separates $\psi^+$ from $\phi^\pm$ in the outer sectors. The only remaining Bell-label competition is therefore between $\phi^+$ and $\phi^-$ in those outer sectors. The exact completeness result assumes one-dimensional retained-rival support; in the reported calculations, this condition is assessed numerically at the stated finite-precision profile and serves as the applicability check for the hatted results. The TTN provides a photon-number-resolved representation in which the required boundary data can be propagated. Still, the completeness result follows from this support geometry rather than from the tensor-network representation itself.

The half-resource composition relation gives a complementary analytical statement. When the ancillary input factorizes across the two analyzer halves, and each half has definite photon number and definite alternating residue, the two outer branches probe the two halves separately. Hence, the $\phi$ unique masses combine additively. The half-resource decomposition identifies the two outer-sector contributions through the corresponding support quantities. At the same time, the reflected-endpoint corollary gives the arithmetic mean of the two symmetric endpoint probabilities. It shows that the asymmetric pairing cannot exceed the better endpoint for the same analyzer. In the reported calculations, these support quantities are evaluated with the TTN and independently checked with the half-circuit calculation.

The finite-instance calculations do not establish an asymptotic computational advantage, because no scaling bounds are proved for the retained directional-record counts or represented photon-number-sector dimensions. The reported success probabilities are evaluations for the specified passive analyzers and ancillary resources under lossless optics, ideal ancillary preparation, indistinguishable photons, ideal PNRDs, and equal Bell priors. They are not global optima over passive interferometers or ancillary resources, and they do not provide bounds on more general Bell measurements.

If a retained rival has support dimension greater than one at either half root, a bilinear root contraction can vanish even when both half-boundary vectors are nonzero. Hence, the separated-half zero criterion is no longer complete. A higher-rank treatment therefore requires additional joint information from the two halves, unless another structural condition excludes such cancellations. Photon loss~\cite{Wein2016}, partial distinguishability~\cite{Shchesnovich2015}, detector inefficiency, and dark counts~\cite{Wein2016} change the probabilities of the detector records considered here. Records that have exactly zero probability in the ideal model can then acquire nonzero probability, so the present support-defined USD criterion is not directly applicable without modification. An error-tolerant discrimination objective would be a different problem.

\section*{Data availability}
The numerical data and python code supporting the findings of this study are available from the corresponding author upon reasonable request.

\begin{acknowledgments}
AK acknowledges financial support from the Institution of Eminence (IoE), Banaras Hindu University, Varanasi, under the ``International Visiting Student Program'' scheme, and from the Council of Scientific and Industrial Research (CSIR), Government of India, under Award No.~09/0013(21020)/2025-EMR-I. DKM acknowledges the Chanakya Doctoral Fellowship (Grant No.~I-HUB/DF/2022-23/04). WR and MT acknowledge support from JST Moonshot R\&D, Grant No. JPMJMS226C and Grant No. JPMJMS2061, JST ASPIRE, Grant No. JPMJAP2427, JST COINEXT Grant No. JPMJPF2221, and  JST CRONOS, Grant No. JPMJCS24N6.
\end{acknowledgments}

\appendix

\section{Directional contraction and supporting calculations}

\subsection{Photon-number-resolved TTN updates and environment recursions}\label{app:ttn_updates}

This subsection presents the photon-number-resolved tensor updates used in the optical evolution and the recursive double-layer contraction that underlies Eq.~\eqref{eq:general_environment_u1}. Let $\mathsf B$ denote the balanced-beam-splitter tensor in the local Fock basis of Sec.~\ref{sec:ttn},
\[
\mathsf B^{m_1m_2}_{n_1n_2}
=
\bra{n_1,n_2}\Gamma(U_{\mathrm{BS}})\ket{m_1,m_2}.
\]
Photon-number conservation gives $n_1+n_2=m_1+m_2$. When the acted-on modes are sibling leaves of a lowest-level node $v$, the gate can therefore be applied separately in each fixed pair-photon-number block.
\[
\bigl[\widetilde{\mathsf T}_v^{(\ell)}\bigr]^{(q,\omega)}_{n_1n_2}
=
\sum_{\substack{0\le m_1,m_2<d_{\mathrm{loc}}\\m_1+m_2=q}}
\mathsf B^{m_1m_2}_{n_1n_2}
\bigl[\mathsf T_v^{(\ell)}\bigr]^{(q,\omega)}_{m_1m_2}.
\]
Here $0\le n_1,n_2<d_{\mathrm{loc}}$ and $n_1+n_2=q$; matrix elements outside that output sector vanish by photon-number conservation. For a nonsibling gate, the affected path is reshaped across the chosen cut. Let $\mathsf M$ denote the resulting matrix. If $q$ denotes the photon number carried by the refactorized bond, photon-number grading gives
\[
\mathsf M
=
\bigoplus_{q\ge0}\mathsf M^{(q)},
\qquad
\mathsf M^{(q)}
=
\mathsf U^{(q)}\boldsymbol\Sigma^{(q)}\mathsf W^{(q)\dagger}.
\]
In exact arithmetic, retaining every nonzero singular value ensures that this refactorization is exact. In the stored TTN, the singular values are absorbed into a single factor, and the other factor is kept isometric. The orientation follows the local routed split, as in the one-sided SVD gauge described in Sec.~\ref{sec:ttn}; no global root-canonical gauge is assumed.

The Gram environment at a lowest-level node joining physical modes $i$ and $j$ is
\[
\bigl[\Eenv_{\ell,v}(q_v)\bigr]^{\omega}_{\omega'}
=
\sum_{\substack{0\le n_i,n_j<d_{\mathrm{loc}}\\n_i+n_j=q_v}}
[\mathsf T_v^{(\ell)}]^{(q_v,\omega)}_{n_i n_j}
\left(
[\mathsf T_v^{(\ell)}]^{(q_v,\omega')}_{n_i n_j}
\right)^* .
\]
For a higher non-root node, primed indices denote independent bra-layer copies, and all compatible child photon-number partitions are summed.
\begin{align}
\bigl[\Eenv_{\ell,v}(q_v)\bigr]^{\omega}_{\omega'}
={}&
\sum_{\substack{q_{\mathrm L}+q_{\mathrm R}=q_v\\
\alpha,\alpha',\beta,\beta'}}
[\mathsf T_v^{(\ell)}]^{(q_v,\omega)}
_{(q_{\mathrm L},\alpha)(q_{\mathrm R},\beta)}
\nonumber\\
&\times
\bigl[\Eenv_{\ell,v_{\mathrm L}}(q_{\mathrm L})\bigr]^{\alpha}_{\alpha'}
\bigl[\Eenv_{\ell,v_{\mathrm R}}(q_{\mathrm R})\bigr]^{\beta}_{\beta'}
\nonumber\\
&\times
\left(
[\mathsf T_v^{(\ell)}]^{(q_v,\omega')}
_{(q_{\mathrm L},\alpha')(q_{\mathrm R},\beta')}
\right)^* .
\label{eq:internal_environment_u1}
\end{align}
The child-photon-number sum is required because Eq.~\eqref{eq:general_environment_u1} sums over every subtree record of parent photon number $q_v$; unlike the record-specific recursion in Eq.~\eqref{eq:internal_boundary_amplitude}, a fixed parent photon number does not select a unique child partition.

\subsection{Finite-rival transfer and completeness proof}
\label{app:general-records}

At a lowest internal node $v$ joining the two physical modes $i$ and $j$, the target Gram matrix associated with one directional record is evaluated directly from the local tensor,
\begin{align}
\bigl[\mathsf D_{t,v}(\mathfrak d)\bigr]_{\omega\omega'}
={}&
\sum_{\substack{0\le n_i,n_j<d_{\mathrm{loc}}\\
n_i+n_j=q_v\\
\mathfrak d_v(n_i,n_j)=\mathfrak d}}
\bigl[\mathsf T_v^{(t)}\bigr]^{(q_v,\omega)}_{n_i n_j}
\left(
\bigl[\mathsf T_v^{(t)}\bigr]^{(q_v,\omega')}_{n_i n_j}
\right)^* .
\label{eq:app-record-leaf}
\end{align}
For a compatible pair of child records at a higher node,
\begin{align}
&\bigl[\mathsf Y_v^{(t)}(\mathfrak d_{v_{\mathrm L}},\mathfrak d_{v_{\mathrm R}})\bigr]_{\omega\omega'}
\nonumber\\[-1mm]
&=\sum_{\alpha,\alpha',\beta,\beta'}
\bigl[\mathsf T_v^{(t)}\bigr]^{(q_v,\omega)}_{(q_{\mathrm L},\alpha)(q_{\mathrm R},\beta)}
\bigl[\mathsf D_{t,v_{\mathrm L}}(\mathfrak d_{v_{\mathrm L}})\bigr]_{\alpha\alpha'}
\nonumber\\[-1mm]
&\quad\times
\bigl[\mathsf D_{t,v_{\mathrm R}}(\mathfrak d_{v_{\mathrm R}})\bigr]_{\beta\beta'}
\left(
\bigl[\mathsf T_v^{(t)}\bigr]^{(q_v,\omega')}_{(q_{\mathrm L},\alpha')(q_{\mathrm R},\beta')}
\right)^* ,
\label{eq:app-record-transfer}
\end{align}
where $q_{\mathrm L}+q_{\mathrm R}=q_v$.

For each retained rival $k\in\mathcal K_t(q_A)$, choose vectors $\mathbf g_{k,\mathrm L}$ and $\mathbf g_{k,\mathrm R}$ representing the two child records. Use the zero vector for a zero record and any representative for a nonzero projective class. Propagate these vectors as
\begin{equation}
\bigl[\mathbf g_{k,v}\bigr]_{\omega}
=
\sum_{\alpha,\beta}
\bigl[\mathsf T_v^{(k)}\bigr]^{(q_v,\omega)}_{(q_{\mathrm L},\alpha)(q_{\mathrm R},\beta)}
\bigl[\mathbf g_{k,\mathrm L}\bigr]_{\alpha}
\bigl[\mathbf g_{k,\mathrm R}\bigr]_{\beta}.
\label{eq:app-rival-record-transfer}
\end{equation}
The parent rival record is zero when $\mathbf g_{k,v}=0$ and is $[\mathbf g_{k,v}]$ otherwise. A nonzero rescaling of either child representative changes $\mathbf g_{k,v}$ only by an overall nonzero factor, so the parent zero/nonzero status and projective class are independent of the chosen representatives. Child-record pairs that produce the same parent record are merged by summing their target Gram contributions,
\begin{equation}
\bigl[\mathsf D_{t,v}(\mathfrak d_v)\bigr]_{\omega\omega'}
=
\sum_{(\mathfrak d_{v_{\mathrm L}},\mathfrak d_{v_{\mathrm R}})\mapsto\mathfrak d_v}
\bigl[\mathsf Y_v^{(t)}(\mathfrak d_{v_{\mathrm L}},\mathfrak d_{v_{\mathrm R}})\bigr]_{\omega\omega'}.
\label{eq:app-record-merge}
\end{equation}
Summing over all directional records at fixed photon number removes the record constraint and recovers the ordinary target environment.

If no rival remains, the result follows directly. We therefore consider $\mathcal K_t(q_A)\neq\varnothing$ and assume that every retained rival satisfies the rank-one half-root condition of Proposition~\ref{prop:directional_exactness}. At a half root $v_X$, write $\mathsf D_{t,X}\equiv\mathsf D_{t,v_X}$ and denote the corresponding directional record by $\mathfrak d_X$. For a half $X\in\{A,B\}$ and a subset $J\subseteq\mathcal K_t(q_A)$, define the zero-filtered target Gram matrix
\begin{equation}
F_{t,X}(J;q)
=
\sum_{\substack{\mathfrak d_X=(q,\ldots)\\
\zeta_X^{(k)}=0\;\forall k\in J}}
\mathsf D_{t,X}(\mathfrak d_X).
\label{eq:finite-rival-filtered-gram}
\end{equation}
Thus $F_{t,X}(\varnothing;q)=\Eenv_{t,X}(q)$, while $F_{t,X}(\{k\};q)=F_{t,X}^{(k,0)}(q)$ of Eq.~\eqref{eq:target-half-zero-gram}. Inclusion--exclusion within one half gives the target Gram matrix carried by records for which every rival in $J$ has a nonzero half-root record,
\begin{equation}
F_{t,X}^{\mathrm{nz}}(J;q)
=
\sum_{I\subseteq J}(-1)^{|I|}F_{t,X}(I;q).
\label{eq:finite-rival-nz-gram}
\end{equation}
For $J=\{k\}$ this reduces to
$F_{t,X}^{\mathrm{nz}}=\Eenv_{t,X}-F_{t,X}^{(k,0)}$.

Under rank-one rival support, Eq.~\eqref{eq:half_zero_completeness} implies that a rival $k$ has a nonzero global amplitude on a detector record exactly when its half-root records are nonzero on both halves. For a fixed subset $J$ of rivals, the target probability carried by records on which every $k\in J$ is globally nonzero is therefore
\begin{equation}
\tr\!\left[
F_{t,A}^{\mathrm{nz}}(J;q_A)^T
\Rt_t(q_A)
F_{t,B}^{\mathrm{nz}}(J;q_B)
\Rt_t(q_A)^\dagger
\right].
\label{eq:finite-rival-nz-root-weight}
\end{equation}
Inclusion--exclusion over these globally nonzero rival events gives the exact target-unique sector mass
\begin{align}
u_t(q_A)
={}&
\sum_{J\subseteq\mathcal K_t(q_A)}(-1)^{|J|}
\tr\!\left[
F_{t,A}^{\mathrm{nz}}(J;q_A)^T
\Rt_t(q_A)
\right.
\nonumber\\[-1mm]
&\left.\hspace{33mm}\times
F_{t,B}^{\mathrm{nz}}(J;q_B)
\Rt_t(q_A)^\dagger
\right].
\label{eq:finite_rival_sector_mass}
\end{align}
No rank condition on the target is used in Eqs.~\eqref{eq:finite-rival-filtered-gram}--\eqref{eq:finite_rival_sector_mass}. If the target half-root support is rank one, each filtered Gram matrix is proportional to the same supported projector on that half; taking traces then reduces Eq.~\eqref{eq:finite_rival_sector_mass} to the scalar inclusion--exclusion form used for the reported sectors. For one retained rival, the matrix formula reduces to Eq.~\eqref{eq:one-rival-sector-mass-general}, and its rank-one-target specialization is Eq.~\eqref{eq:one-rival-sector-mass}.

Projective reduction and record merging preserve all information required by the support test. Once the occupations outside a subtree $v$ are fixed, contraction of the tensors above $v$ is a linear functional $\mathcal F_{k,v}^{(q_v)}$ of the rival boundary vector, so $\mathcal F_{k,v}^{(q_v)}(\xi\mathbf f)=\xi\mathcal F_{k,v}^{(q_v)}(\mathbf f)$ for every $\xi\neq0$. Thus, replacing a nonzero rival boundary vector by its projective class preserves every later zero test and Eq.~\eqref{eq:app-record-leaf} is the lowest-node Gram sum over exactly the patterns assigned to a directional record. At a higher node, every parent pattern has a unique ordered pair of child patterns, and Eqs.~\eqref{eq:app-record-transfer}--\eqref{eq:app-record-merge} sum all child pairs mapped to the same parent record. Induction over the tree, therefore, shows that the directional records partition the represented detector patterns without omission or double counting; restricting a transfer to photon-number blocks compatible with the requested parent photon number removes no patterns from that sector.

At the two half roots, Eq.~\eqref{eq:half_zero_completeness} implies that a retained rival has zero global amplitude exactly when its left or right record is zero. The Eqs.~\eqref{eq:finite-rival-nz-gram} and~\eqref{eq:finite_rival_sector_mass} therefore select exactly the target patterns for which every retained rival has zero global amplitude. A Bell label omitted from $\mathcal K_t(q_A)$ is either excluded by the exact preliminary selection rules or has zero total weight in the root sector. In the latter case the sector weight is a sum of nonnegative squared amplitudes, so every amplitude of that label vanishes in the sector. Thus the retained-rival conditions are sufficient to prove Proposition~\ref{prop:directional_exactness}. The proof assumes an exact TTN representation and exact zero/projective decisions.

\subsection{Finite-precision rules}
\label{app:numerics}

Hatted quantities denote numerical TTN evaluations; support-derived hatted masses additionally depend on the finite-threshold decisions specified here. The reference calculations use
\begin{equation}
(\varepsilon_{\mathrm{svd}},\varepsilon_{\mathrm{supp}},\varepsilon_0,\varepsilon_{\mathrm{merge}})
=
(10^{-13},10^{-12},10^{-10},10^{-7}),
\label{eq:numerical-reference-profile}
\end{equation}
with binary64 arithmetic, $\varepsilon_{\mathrm{mach}}=2^{-52}\simeq2.220446\times10^{-16}$. Here $\varepsilon_{\mathrm{svd}}$ controls photon-number-block singular-value retention, $\varepsilon_{\mathrm{supp}}$ the Gram-support and retained-rival root-factor decisions, $\varepsilon_0$ boundary-vector zero decisions, and $\varepsilon_{\mathrm{merge}}$ projective-class merging.

For a charge block $q$, let $s_{q,1}\ge s_{q,2}\ge\cdots$ be its singular values, $s_{\max}=\max_q s_{q,1}$, and let $D_{\mathrm{row}}\times D_{\mathrm{col}}$ be the full conceptual matrix shape before the photon-number decomposition. With
\[
\delta_{\mathrm{svd}}
=
\varepsilon_{\mathrm{mach}}\max(D_{\mathrm{row}},D_{\mathrm{col}})s_{\max},
\]
a singular value is retained when
\begin{equation}
s_{q,j}>
\max\!\left(
\varepsilon_{\mathrm{svd}}s_{q,1},
\delta_{\mathrm{svd}}
\right).
\label{eq:app-svd-rule}
\end{equation}
The relative term is local to block $q$, while the roundoff floor is common to the full factorization. The discarded absolute and relative squared-singular-value weights are recorded for each refactorization and are required to remain below $10^{-20}$.

Support eigenvalues, nonnegative support masses, boundary-vector norms, and retained-rival root factors are classified using three protected outcomes. A quantity is safely zero, unresolved, or safely nonzero. Roundoff floors proportional to machine precision, the relevant numerical scale, and the square root of the working dimension are included in these decisions. The requested support tolerance $\varepsilon_{\mathrm{supp}}$ is absolute rather than rescaled by the largest Gram eigenvalue; the retained-rival root-factor test instead scales this tolerance by the spectral norm of the corresponding root block. Before support classification, each numerical Gram matrix is checked for Hermiticity and positive semidefiniteness within the binary64 roundoff scale. All support and boundary-vector thresholds are evaluated in the chosen one-sided SVD gauge. If a decision required by the directional construction or scalar root assembly remains unresolved, the corresponding numerical result is not accepted.

A resolved nonzero boundary vector is normalized to unit norm and its global phase is fixed using its largest-magnitude component. Projective representatives are compared by Euclidean distance after this canonicalization. Two projective classes are identified only when their distance is safely below $\varepsilon_{\mathrm{merge}}$; a distance in the protected gap is treated as unresolved, and the corresponding numerical result is not accepted. The precise roundoff safety factors and projective-class comparison rules are specified in the numerical implementation used for the reported calculations.

\subsection{Numerical validation}
\label{app:validation}

Using the matrix convention of Sec.~\ref{sec:model}, let $U_{\mathrm{an}}[\mathbf n|\mathbf m]$ be the $Q\times Q$ matrix obtained by repeating input column $i$ exactly $m_i$ times and output row $j$ exactly $n_j$ times. For $\mathbf m,\mathbf n\in\Omega_{M,Q}$, the bosonic transition amplitude is~\cite{Scheel2008}
\begin{equation}
\langle\mathbf n|\Gamma(U_{\mathrm{an}})|\mathbf m\rangle
=
\frac{\operatorname{Per}\!\left(U_{\mathrm{an}}[\mathbf n|\mathbf m]\right)}
{\sqrt{\displaystyle\prod_{i=0}^{M-1}m_i!\prod_{j=0}^{M-1}n_j!}}.
\label{eq:permanent-validation-amplitude}
\end{equation}
Here $\operatorname{Per}$ denotes the matrix permanent. Bell-input amplitudes are obtained by linear superposition over their nonzero Fock branches. The permanent calculations are used only for independent validation.

The independent patternwise classifier uses $\varepsilon_{\mathrm{dec}}=10^{-10}$ with a factor-ten guard band. Values with $|\widehat c|\le10^{-11}$ are resolved zero, values with $|\widehat c|\ge10^{-9}$ are resolved nonzero, and intermediate values are unresolved. A detector record is certified as unique only when one Bell amplitude is resolved nonzero, and the other three are resolved zero; records involving unresolved amplitudes are not certified.

For the asymmetric $M=16$ resource, consider the physical detector record
\begin{equation}
\mathbf n_*
=
(0,1,1,2,1,2,0,1,\,
1,2,0,0,1,0,0,2),
\label{eq:specified-record-example}
\end{equation}
with $Q=14$ and $q_A=8$. The TTN contraction gives
\begin{equation}
\begin{aligned}
\left|\widehat c_{\phi^+}(\mathbf n_*)\right|
&\approx3.662109375\times10^{-4},\\
\max_{\ell\neq\phi^+}
\left|\widehat c_\ell(\mathbf n_*)\right|
&\approx5.76\times10^{-20}.
\end{aligned}
\label{eq:specified-record-certificate}
\end{equation}
The record is therefore classified as $\phi^+$. The selected $M=16$ validation set also contains one unique record for each of the other three Bell labels, one $\phi^+$--$\phi^-$ inconclusive record, and one numerically unsupported record. Across all six certificates, the maximum absolute TTN--permanent amplitude discrepancy is below $2.1\times10^{-16}$. These spot checks validate only the selected amplitudes; they do not, by themselves, validate the complete detector-support assembly.

At $M=8$ and $Q=6$, all $1716$ detector records were evaluated independently. Complete enumeration reproduces the directional Bell-label unique masses. With the modulo-four selection rule disabled, some outer target sectors retain two rivals; the finite-rival directional calculation gives $\widehat P_{\mathrm{USD}}=0.75$ and Bell-label masses $(1,1,0.5,0.5)$ for $(\psi^+,\psi^-,\phi^+,\phi^-)$. The independent permanent evaluation of all $1716$ records agrees with a maximum Bell-label mass difference below $9\times10^{-16}$.

At $M=16$, three TTN update paths reproduce the same aggregate, Bell-label-resolved, and root-sector masses, with the largest difference in $\widehat P_{\mathrm{USD}}$ equal to $1.33\times10^{-15}$. For the asymmetric instance, an $18$-point threshold sweep varied $\varepsilon_{\mathrm{supp}}$ from $10^{-14}$ to $10^{-10}$, $\varepsilon_{\mathrm{svd}}$ from $10^{-13}$ to $10^{-12}$, $\varepsilon_0$ from $10^{-11}$ to $10^{-9}$, and $\varepsilon_{\mathrm{merge}}$ from $10^{-8}$ to $10^{-6}$. The aggregate, Bell-label, and root-sector masses remain unchanged at the stored numerical precision, with all required decisions resolved.

Across the reported calculations, the largest recorded absolute discarded squared-singular-value weight is $2.71\times10^{-27}$, well below the $10^{-20}$ acceptance bound. State normalization, $0\le\widehat u_t(q_A)\le\widehat w_t(q_A)$, and probability-range conditions are also checked for every accepted calculation.

At $M=32$, five $\gamma$-block compositions were evaluated with both the TTN support calculation and a separately implemented half-circuit calculation. Across the aggregate and sign-resolved masses, the maximum absolute discrepancy is below $4.1\times10^{-10}$, within the comparison tolerance of $2\times10^{-7}$.

For every reported overlap sector, the retained-rival half-root Gram ranks required by Proposition~\ref{prop:directional_exactness} are checked before the directional assembly is accepted. The reported sectors also have rank-one target half-root support, which is checked because the numerical evaluation uses the scalar specialization in Eq.~\eqref{eq:one-rival-sector-mass}. A higher-rank target is allowed by Proposition~\ref{prop:directional_exactness}, with its exact matrix assembly given by Eq.~\eqref{eq:finite_rival_sector_mass}; a higher-rank rival invalidates the separated-half zero criterion unless additional joint left--right information or another structural condition excludes root cancellation.

Together, these checks provide finite-instance validation of the reported numerical results. They do not establish exact support beyond the analytical statements proved above.

\subsection{Numerical workflow and principal contraction costs}
\label{app:cost}

The numerical evaluation consists of TTN evolution, construction of the half-tree environment, directional transfer of the retained rivals, and root-sector assembly. If an applicability check fails or a required numerical decision remains unresolved, no aggregate result is reported. The bounds below describe the principal tensor contractions and are not a complete wall-time model of the numerical calculation.

Here $M$ is the number of optical modes; $r=r_{\mathrm{sec}}^{\mathrm{peak}}$ bounds the represented photon-number-sector dimensions entering these contractions; $S$ bounds the retained directional-record count at a tracked half-tree node; $n_{\mathrm{riv}}=|\mathcal K_t(q_A)|$ is the number of retained rival Bell labels; $a_{gq}\times b_{gq}$ is a nonempty reshaped photon-number block at refactorization step $g$ and sector $q$; and $B_{\mathrm{env}}$ counts compatible child-charge block-pair contractions in one half-tree environment construction.

At refactorization step $g$ and photon-number sector $q$, a nonempty reshaped block of size $a_{gq}\times b_{gq}$ is factorized with cost $O\!\left(a_{gq}b_{gq}\min(a_{gq},b_{gq})\right)$. The total refactorization contribution is obtained by summing this cost over the blocks encountered during the routed updates. For the environment construction, each compatible child-charge block pair requires a double-layer contraction bounded by $O(r^4)$. With $B_{\mathrm{env}}$ compatible block-pair contractions, the environment cost is $O(B_{\mathrm{env}}r^4)$. For the target transfer, at most $S^2$ pairs of child directional records are combined at each of $O(M)$ tree nodes. Each target Gram contraction is bounded by $O(r^4)$, giving $O(MS^2r^4)$. A retained rival carries a boundary vector rather than a Gram matrix, so each rival contraction is bounded by $O(r^3)$. Propagating $n_{\mathrm{riv}}$ rivals therefore gives $O(Mn_{\mathrm{riv}}S^2r^3)$. These estimates do not include the additional bookkeeping required for directional-record management and routing.

The finite-rival inclusion--exclusion has cost $O(3^{n_{\mathrm{riv}}})$. There are only four Bell inputs, so $n_{\mathrm{riv}}\le3$. In the overlap sectors used for the reported results, $n_{\mathrm{riv}}=1$, and this step is not a leading cost in the reported calculations.

Local gate application and routing depend on the actual charge-block shapes and routing sequence. Thus $\chi_{\mathrm{peak}}$ characterizes the size of the TTN representation but does not by itself determine the evolution time. These are component-wise bounds for the finite calculations considered here, not an asymptotic complexity theorem. No analytical $M$-scaling is established for the represented sector dimensions, bond dimensions, retained directional-record count, or routed refactorizations.

\section{Half-resource composition and reflection covariance}
\label{app:half-additivity}

\subsection{Composition identity for factorized half resources}
\label{app:half-additivity-derivation}

Let the normalized ancillary input factorize across the global analyzer cut as
\[
|\mathcal A_A\rangle\otimes|\mathcal A_B\rangle,
\]
where $|\mathcal A_X\rangle$ carries a definite photon number $Q_X$ on half $X\in\{A,B\}$. The complete input contains $Q=Q_A+Q_B+2$ photons. For a half occupation vector $\mathbf n_X$, define the half-restricted alternating residue using the global physical mode parity,
\[
\nu_X(\mathbf n_X)
=
\left[\sum_{j\in X}(-1)^j n_j\right]_4,
\qquad X\in\{A,B\}.
\]
Here, the sum runs over the physical modes belonging to half $X$. Each half resource is assumed to have a definite value of this residue on its nonzero Fock branches; the Bell slots are unoccupied in the ancillary state. Both halves are defined on the same half-mode space and are acted on by the fixed half analyzer $U_{\mathrm h}^{(N)}$.

For the resources used here, the elementary block $\gamma_1$ has residue $2$, while for $\tau\ge2$ the two branches of $\gamma_\tau$ have alternating differences $\pm2^\tau\equiv0\pmod4$. A recursive half therefore has residue $2$ because it contains exactly one $\gamma_1$, and a $\gamma_1$-product half has residue $2$ because it contains the odd number $2^N-1$ of elementary blocks. The mixed half $\gamma_2^{\otimes3}\otimes\gamma_1$, and its reflected ordering, also has residue $2$.

Let $s\in\{+1,-1\}$ denote the numerical Bell sign. On the two Bell modes of one analyzer half, define the normalized state
\[
|\theta_s\rangle
=
\frac{|2,0\rangle+s|0,2\rangle}{\sqrt2}.
\]
For fixed half resources, define
\[
\begin{aligned}
y_A(\mathbf n_A)
&=
\langle\mathbf n_A|\Gamma(U_{\mathrm h}^{(N)})
\bigl(|\mathcal A_A\rangle\otimes|0,0\rangle\bigr),\\
x_A^s(\mathbf n_A)
&=
\langle\mathbf n_A|\Gamma(U_{\mathrm h}^{(N)})
\bigl(|\mathcal A_A\rangle\otimes|\theta_s\rangle\bigr),
\end{aligned}
\]
whereas
\[
\begin{aligned}
y_B(\mathbf n_B)
&=
\langle\mathbf n_B|\Gamma(U_{\mathrm h}^{(N)})
\bigl(|0,0\rangle\otimes|\mathcal A_B\rangle\bigr),\\
x_B^s(\mathbf n_B)
&=
\langle\mathbf n_B|\Gamma(U_{\mathrm h}^{(N)})
\bigl(|\theta_s\rangle\otimes|\mathcal A_B\rangle\bigr).
\end{aligned}
\]
Thus, $y_X$ is defined on $\Omega_{X,Q_X}$ and $x_X^s$ on $\Omega_{X,Q_X+2}$, with
\[
\sum_{\mathbf n_X\in\Omega_{X,Q_X}}|y_X(\mathbf n_X)|^2
=
\sum_{\mathbf n_X\in\Omega_{X,Q_X+2}}|x_X^s(\mathbf n_X)|^2
=1.
\]
When used as an argument of $\kappa_X^s$, $\mathcal A_X$ denotes the half resource $|\mathcal A_X\rangle$. Define
\begin{equation}
\kappa_X^s(\mathcal A_X)
=
\sum_{\substack{
\mathbf n_X\in\Omega_{X,Q_X+2}\\
x_X^{-s}(\mathbf n_X)=0
}}
\left|x_X^s(\mathbf n_X)\right|^2,
\label{eq:app-kappa-definition}
\end{equation}
so that $0\le\kappa_X^s\le1$. A separate condition $x_X^s\neq0$ is unnecessary because a vanishing target amplitude contributes zero to Eq.~\eqref{eq:app-kappa-definition}.

\noindent\textit{Factorized-half composition.}
Under the assumptions above,
\begin{equation}
\begin{aligned}
\mu_{\psi^+}&=\mu_{\psi^-}=1,\\
\mu_{\phi^s}
&=
\frac12\left[
\kappa_A^s(\mathcal A_A)
+
\kappa_B^s(\mathcal A_B)
\right],
\qquad s\in\{+1,-1\}.
\end{aligned}
\label{eq:half-additivity-phi}
\end{equation}

\noindent\textit{Proof.}
Let the two ancillary half residues be $\nu_A$ and $\nu_B$. The complete ancillary residue is $[\nu_A+\nu_B]_4$; the Bell components of $\psi^\pm$ add residue $0$, whereas those of $\phi^\pm$ add residue $2$. Conservation of the residue therefore separates $\psi^+$ from $\phi^\pm$ in the outer sectors, while the central root-photon-number sector isolates $\psi^-$. Hence $\mu_{\psi^+}=\mu_{\psi^-}=1$. In the outer sector $q_A=Q_A+2$, both Bell photons enter half $A$, and Eq.~\eqref{eq:bell_layer_action} gives
\begin{equation}
c_{\phi^s}(\mathbf n_A,\mathbf n_B)
=
\frac{\mathrm i}{\sqrt2}\,
x_A^s(\mathbf n_A)y_B(\mathbf n_B).
\label{eq:app-outer-amplitude}
\end{equation}
For a nonzero target amplitude, the common factor $y_B$ is nonzero, so the opposite $\phi$ sign vanishes exactly when $x_A^{-s}=0$. The Eq.~\eqref{eq:app-kappa-definition} then gives
\[
u_{\phi^s}(Q_A+2)=\frac12\kappa_A^s(\mathcal A_A).
\]
Interchanging the two halves gives
\[
u_{\phi^s}(Q_A)=\frac12\kappa_B^s(\mathcal A_B).
\]
The two contributions belong to disjoint root-photon-number sectors and therefore add, yielding Eq.~\eqref{eq:half-additivity-phi}.
\hfill$\square$

With equal Bell priors,
\begin{equation}
P_{\mathrm{USD}}
=
\frac12
+
\frac18
\sum_{s=\pm1}
\left[
\kappa_A^s(\mathcal A_A)
+
\kappa_B^s(\mathcal A_B)
\right].
\label{eq:half-additivity-pusd}
\end{equation}
The Eq~\eqref{eq:half-additivity-pusd} follows from the Bell-layer sector decomposition and the two local half analyzers, independently of the TTN representation.

\subsection{Reflection covariance and asymmetric composition}

Let $R_{\mathrm h}$ reverse the physical mode order of one complete analyzer half. For a half resource $\mathcal A_X$, let $\mathcal A_X^{\mathrm R}$ denote the state with reversed ancillary-mode order and let $\mathbf n_X^{\mathrm R}$ denote the reversed detector pattern. Let $\mathsf S_2$ denote the two-mode swap. In the binary physical-mode labeling used for $U_{\mathrm h}^{(N)}$, complete reversal is
\[
R_{\mathrm h}=\mathsf S_2^{\otimes(N+1)},
\qquad
[R_{\mathrm h},U_{\mathrm h}^{(N)}]=0,
\]
because the balanced beam splitter commutes with the two-mode swap $\mathsf S_2$, since a half contains an even number of modes, reversal exchanges even and odd positions, so the alternating residue changes sign modulo four; in particular, a definite-residue half remains definite residue after reflection.

For an arbitrary state $|\chi\rangle$ of the two Bell slots, Fock-space reversal gives
\[
\Gamma(R_{\mathrm h})
\bigl(|\mathcal A_A\rangle\otimes|\chi\rangle\bigr)
=
(\mathsf S_2|\chi\rangle)\otimes|\mathcal A_A^{\mathrm R}\rangle.
\]
For the local Bell-mode states defined above,
\[
\mathsf S_2|0,0\rangle=|0,0\rangle,
\qquad
\mathsf S_2|\theta_s\rangle=s|\theta_s\rangle.
\]
Restoring the resource as an explicit argument when comparing different half inputs, the commutation relation above gives
\[
\begin{aligned}
y_B(\mathbf n_A^{\mathrm R};\mathcal A_A^{\mathrm R})
&=y_A(\mathbf n_A;\mathcal A_A),\\
x_B^s(\mathbf n_A^{\mathrm R};\mathcal A_A^{\mathrm R})
&=s\,x_A^s(\mathbf n_A;\mathcal A_A).
\end{aligned}
\]
The factor $s$ changes neither support zeros nor squared amplitudes, and record reversal is one-to-one. Hence
\begin{equation}
\kappa_A^s(\mathcal A_A)
=
\kappa_B^s(\mathcal A_A^{\mathrm R}).
\label{eq:app-reflection-kappa}
\end{equation}
Reflection therefore preserves the sign-resolved half-resource fraction and does not exchange the two $\phi$ signs.

\noindent\textit{Reflected-endpoint corollary.}
Let $\mathcal A_1$ and $\mathcal A_2$ be normalized half resources on the same half-mode space, acted on by the same fixed half analyzer, with definite photon number and definite alternating residue. With equal Bell priors, let $P_{\mathrm{sym}}(\mathcal A_j)$ denote the success probability for $(\mathcal A_j,\mathcal A_j^{\mathrm R})$ and $P_{\mathrm{asym}}(\mathcal A_1,\mathcal A_2)$ and that for $(\mathcal A_1,\mathcal A_2^{\mathrm R})$. Equations~\eqref{eq:half-additivity-pusd} and~\eqref{eq:app-reflection-kappa} then give
\begin{equation}
P_{\mathrm{asym}}(\mathcal A_1,\mathcal A_2)
=
\frac12\left[
P_{\mathrm{sym}}(\mathcal A_1)
+
P_{\mathrm{sym}}(\mathcal A_2)
\right].
\label{eq:app-reflection-average}
\end{equation}
The mean comes from the two disjoint outer root-photon-number sectors of the same coherent Bell-plus-ancilla input, not from a classical mixture of endpoint experiments. Under the assumptions of the corollary, the asymmetric pairing cannot outperform the better-reflected symmetric endpoint.

\noindent\textit{Elementary-block endpoint.}
For the $M=32$ $\gamma_1$-product endpoint, write $F_2=2^{-1/2}\bigl(\begin{smallmatrix}1&1\\1&-1\end{smallmatrix}\bigr)$ and $D=\operatorname{diag}(1,\mathrm i)$. Since $U_{\mathrm{BS}}=DF_2D$, a fixed permutation that groups the rail index first maps the half analyzer to $\mathbb I_2\otimes F_2^{\otimes N}$ up to diagonal input/output phases. Output phases do not change photon-counting probabilities, and the permutation only relabels detector records. In the outer root sector, each two-rail position has a total occupation of 0 or 2, so the remaining input-position phases are common and do not change the support. A $\pi/2$ phase on the second rail maps each $|\gamma_1\rangle$ block to $(|2,0\rangle-|0,2\rangle)/\sqrt2$ and only exchanges the local $\phi$ signs.

Yamazaki \textit{et al.}~\cite{YamazakiIkutaYamamoto2023} also consider the tensor-product $F_2^{\otimes N}$ replacement. Their Supplemental Lemma~S4 gives the group condition required for the same probability calculation. The binary translation group $(\mathbb Z_2)^N$ satisfies this condition because, for any two of the $2^N$ positions, exactly one translation maps one to the other. Corollary~9 of the cited arXiv version therefore applies to the present endpoint. For $N=3$, corresponding to eight positions, it gives $P_{\mathrm{sym}}=403/512$.

For the reflected symmetric endpoints, the recursive and $\gamma_1$-product success probabilities are $7/8$ and $25/32$ for $M=16$, and $15/16$ and $403/512$ for $M=32$~\cite{EwertvanLoock2014,YamazakiIkutaYamamoto2023}. Equation~\eqref{eq:app-reflection-average} therefore gives
\begin{equation}
\begin{aligned}
P_{\mathrm{asym}}(M=16)
&=
\frac12\left(\frac78+\frac{25}{32}\right)
=\frac{53}{64},\\
P_{\mathrm{asym}}(M=32)
&=
\frac12\left(\frac{15}{16}+\frac{403}{512}\right)
=\frac{883}{1024}.
\end{aligned}
\label{eq:app-asymmetric-success-values}
\end{equation}
The decomposition above identifies the corresponding label- and sector-resolved contributions. Their numerical values reported here remain finite-threshold TTN quantities and are independently checked by the half-circuit calculation.

\bibliography{ref}

@article{Calsamiglia2001,
  title={Maximum efficiency of a linear-optical {Bell}-state analyzer},
  author={Calsamiglia, John and L{\"u}tkenhaus, Norbert},
  journal={Applied Physics B},
  volume={72},
  number={1},
  pages={67--71},
  year={2001},
  publisher={Springer},
  doi={10.1007/s003400000484},
  url={https://doi.org/10.1007/s003400000484}}

@article{Lutkenhaus1999,
  title = {{Bell} measurements for teleportation},
  author = {L\"utkenhaus, N. and Calsamiglia, J. and Suominen, K.-A.},
  journal = {Phys. Rev. A},
  volume = {59},
  issue = {5},
  pages = {3295--3300},
  numpages = {0},
  year = {1999},
  month = {May},
  publisher = {American Physical Society},
  doi = {10.1103/PhysRevA.59.3295},
  url = {https://link.aps.org/doi/10.1103/PhysRevA.59.3295}}

@article{Grice2011,
  title = {Arbitrarily complete {Bell}-state measurement using only linear optical elements},
  author = {Grice, W. P.},
  journal = {Phys. Rev. A},
  volume = {84},
  issue = {4},
  pages = {042331},
  numpages = {6},
  year = {2011},
  month = {Oct},
  publisher = {American Physical Society},
  doi = {10.1103/PhysRevA.84.042331},
  url = {https://link.aps.org/doi/10.1103/PhysRevA.84.042331}}

@article{EwertvanLoock2014,
  title = {$3/4$-Efficient {Bell} Measurement with Passive Linear Optics and Unentangled Ancillae},
  author = {Ewert, Fabian and van Loock, Peter},
  journal = {Phys. Rev. Lett.},
  volume = {113},
  issue = {14},
  pages = {140403},
  numpages = {5},
  year = {2014},
  month = {Sep},
  publisher = {American Physical Society},
  doi = {10.1103/PhysRevLett.113.140403},
  url = {https://link.aps.org/doi/10.1103/PhysRevLett.113.140403}}

@article{OlivoGrosshans2018,
  title = {Ancilla-assisted linear optical {Bell} measurements and their optimality},
  author = {Olivo, Andrea and Grosshans, Fr\'ed\'eric},
  journal = {Phys. Rev. A},
  volume = {98},
  issue = {4},
  pages = {042323},
  numpages = {10},
  year = {2018},
  month = {Oct},
  publisher = {American Physical Society},
  doi = {10.1103/PhysRevA.98.042323},
  url = {https://link.aps.org/doi/10.1103/PhysRevA.98.042323}}

@article{Wein2016,
  title = {Efficiency of an enhanced linear optical {Bell}-state measurement scheme with realistic imperfections},
  author = {Wein, Stephen and Heshami, Khabat and Fuchs, Christopher A. and Krovi, Hari and Dutton, Zachary and Tittel, Wolfgang and Simon, Christoph},
  journal = {Phys. Rev. A},
  volume = {94},
  issue = {3},
  pages = {032332},
  numpages = {13},
  year = {2016},
  month = {Sep},
  publisher = {American Physical Society},
  doi = {10.1103/PhysRevA.94.032332},
  url = {https://link.aps.org/doi/10.1103/PhysRevA.94.032332}}

@article{Chefles1998,
title = {Unambiguous discrimination between linearly independent quantum states},
journal = {Physics Letters A},
volume = {239},
number = {6},
pages = {339-347},
year = {1998},
issn = {0375-9601},
doi = {10.1016/S0375-9601(98)00064-4},
url = {https://www.sciencedirect.com/science/article/pii/S0375960198000644},
author = {Anthony Chefles}}

@article{SinghPfeiferVidal2010,
  title = {Tensor network decompositions in the presence of a global symmetry},
  author = {Singh, Sukhwinder and Pfeifer, Robert N. C. and Vidal, Guifr\'e},
  journal = {Phys. Rev. A},
  volume = {82},
  issue = {5},
  pages = {050301(R)},
  numpages = {4},
  year = {2010},
  month = {Nov},
  publisher = {American Physical Society},
  doi = {10.1103/PhysRevA.82.050301},
  url = {https://link.aps.org/doi/10.1103/PhysRevA.82.050301}}

@article{SinghPfeiferVidal2011,
  title = {Tensor network states and algorithms in the presence of a global {U(1)} symmetry},
  author = {Singh, Sukhwinder and Pfeifer, Robert N. C. and Vidal, Guifre},
  journal = {Phys. Rev. B},
  volume = {83},
  issue = {11},
  pages = {115125},
  numpages = {22},
  year = {2011},
  month = {Mar},
  publisher = {American Physical Society},
  doi = {10.1103/PhysRevB.83.115125},
  url = {https://link.aps.org/doi/10.1103/PhysRevB.83.115125}}

@article{ShiDuanVidal2006,
  title = {Classical simulation of quantum many-body systems with a tree tensor network},
  author = {Shi, Y.-Y. and Duan, L.-M. and Vidal, G.},
  journal = {Phys. Rev. A},
  volume = {74},
  issue = {2},
  pages = {022320},
  numpages = {4},
  year = {2006},
  month = {Aug},
  publisher = {American Physical Society},
  doi = {10.1103/PhysRevA.74.022320},
  url = {https://link.aps.org/doi/10.1103/PhysRevA.74.022320}}

@article{Orus2014,
title = {A practical introduction to tensor networks: Matrix product states and projected entangled pair states},
journal = {Annals of Physics},
volume = {349},
pages = {117-158},
year = {2014},
issn = {0003-4916},
doi = {10.1016/j.aop.2014.06.013},
url = {https://www.sciencedirect.com/science/article/pii/S0003491614001596},
author = {Román Orús}}

@article{Oh2024,
  title={Classical algorithm for simulating experimental {Gaussian} boson sampling},
  author={Oh, Changhun and Liu, Minzhao and Alexeev, Yuri and Fefferman, Bill and Jiang, Liang},
  journal={Nature Physics},
  volume={20},
  number={9},
  pages={1461--1468},
  year={2024},
  publisher={Nature Publishing Group UK London},
  doi={10.1038/s41567-024-02535-8},
  url={https://doi.org/10.1038/s41567-024-02535-8}}

@article{OhGraph2022,
  title = {Classical Simulation of Boson Sampling Based on Graph Structure},
  author = {Oh, Changhun and Lim, Youngrong and Fefferman, Bill and Jiang, Liang},
  journal = {Phys. Rev. Lett.},
  volume = {128},
  issue = {19},
  pages = {190501},
  numpages = {7},
  year = {2022},
  month = {May},
  publisher = {American Physical Society},
  doi = {10.1103/PhysRevLett.128.190501},
  url = {https://link.aps.org/doi/10.1103/PhysRevLett.128.190501}}

@article{Cilluffo2026,
  title = {Heisenberg picture tensor network formalism for optical circuits},
  author = {Cilluffo, Dario and Kost, Matthias and Lorenzoni, Nicola and Plenio, Martin B.},
  journal = {Phys. Rev. Res.},
  volume = {8},
  issue = {2},
  pages = {023098},
  numpages = {9},
  year = {2026},
  month = {Apr},
  publisher = {American Physical Society},
  doi = {10.1103/wtwl-979d},
  url = {https://link.aps.org/doi/10.1103/wtwl-979d}}

@article{VintherKastoryano2025,
  title = {Variational tensor network simulation of {Gaussian} boson sampling and beyond},
  author = {Vinther, Jonas and Kastoryano, Michael J.},
  journal = {Phys. Rev. A},
  volume = {112},
  issue = {2},
  pages = {022605},
  numpages = {12},
  year = {2025},
  month = {Aug},
  publisher = {American Physical Society},
  doi = {10.1103/z463-7gqy},
  url = {https://link.aps.org/doi/10.1103/z463-7gqy}}

@article{Shchesnovich2015,
  title = {Partial indistinguishability theory for multiphoton experiments in multiport devices},
  author = {Shchesnovich, V. S.},
  journal = {Phys. Rev. A},
  volume = {91},
  issue = {1},
  pages = {013844},
  numpages = {16},
  year = {2015},
  month = {Jan},
  publisher = {American Physical Society},
  doi = {10.1103/PhysRevA.91.013844},
  url = {https://link.aps.org/doi/10.1103/PhysRevA.91.013844}}

@article{BennettTeleportation1993,
  title = {Teleporting an unknown quantum state via dual classical and Einstein-Podolsky-Rosen channels},
  author = {Bennett, Charles H. and Brassard, Gilles and Cr\'epeau, Claude and Jozsa, Richard and Peres, Asher and Wootters, William K.},
  journal = {Phys. Rev. Lett.},
  volume = {70},
  issue = {13},
  pages = {1895--1899},
  numpages = {0},
  year = {1993},
  month = {Mar},
  publisher = {American Physical Society},
  doi = {10.1103/PhysRevLett.70.1895},
  url = {https://link.aps.org/doi/10.1103/PhysRevLett.70.1895}}

@article{ZukowskiEntanglementSwapping1993,
  title = {``Event-ready-detectors'' {Bell} experiment via entanglement swapping},
  author = {\ifmmode \dot{Z}\else \.{Z}\fi{}ukowski, M. and Zeilinger, A. and Horne, M. A. and Ekert, A. K.},
  journal = {Phys. Rev. Lett.},
  volume = {71},
  issue = {26},
  pages = {4287--4290},
  numpages = {0},
  year = {1993},
  month = {Dec},
  publisher = {American Physical Society},
  doi = {10.1103/PhysRevLett.71.4287},
  url = {https://link.aps.org/doi/10.1103/PhysRevLett.71.4287}}

@article{KLM2001,
  title={A scheme for efficient quantum computation with linear optics},
  author={Knill, Emanuel and Laflamme, Raymond and Milburn, Gerald J},
  journal={Nature},
  volume={409},
  number={6816},
  pages={46--52},
  year={2001},
  publisher={Nature Publishing Group},
  doi={10.1038/35051009},
  url={https://doi.org/10.1038/35051009}}

@article{BrowneRudolph2005,
  title = {Resource-Efficient Linear Optical Quantum Computation},
  author = {Browne, Daniel E. and Rudolph, Terry},
  journal = {Phys. Rev. Lett.},
  volume = {95},
  issue = {1},
  pages = {010501},
  numpages = {4},
  year = {2005},
  month = {Jun},
  publisher = {American Physical Society},
  doi = {10.1103/PhysRevLett.95.010501},
  url = {https://link.aps.org/doi/10.1103/PhysRevLett.95.010501}}

@article{KokRMP2007,
  title = {Linear optical quantum computing with photonic qubits},
  author = {Kok, Pieter and Munro, W. J. and Nemoto, Kae and Ralph, T. C. and Dowling, Jonathan P. and Milburn, G. J.},
  journal = {Rev. Mod. Phys.},
  volume = {79},
  issue = {1},
  pages = {135--174},
  numpages = {0},
  year = {2007},
  month = {Jan},
  publisher = {American Physical Society},
  doi = {10.1103/RevModPhys.79.135},
  url = {https://link.aps.org/doi/10.1103/RevModPhys.79.135}}

@article{BianchiMarconiBacco2026,
  title={{Bell} state measurements in quantum optics: a review of recent progress and open challenges},
  author={Bianchi, Luca and Marconi, Carlo and Bacco, Davide},
  journal={Quantum Science and Technology},
  volume={11},
  number={2},
  pages={023001},
  year={2026},
  publisher={IOP Publishing},
  doi={10.1088/2058-9565/ae609e},
  url={https://doi.org/10.1088/2058-9565/ae609e}}

@article{Bartolucci2023FBQC,
  title={Fusion-based quantum computation},
  author={Bartolucci, Sara and Birchall, Patrick and Bomb{\'i}n, Hector and Cable, Hugo and Dawson, Chris and Gimeno-Segovia, Mercedes and Johnston, Eric and Kieling, Konrad and Nickerson, Naomi and Pant, Mihir and Pastawski, Fernando and Rudolph, Terry and Sparrow, Chris},
  journal={Nature Communications},
  volume={14},
  pages={912},
  year={2023},
  publisher={Nature Publishing Group},
  doi={10.1038/s41467-023-36493-1},
  url={https://doi.org/10.1038/s41467-023-36493-1}}

@article{DAurelio2025BoostedTeleportation,
  title={Boosted quantum teleportation},
  author={D'Aurelio, Simone E. and Bayerbach, Matthias J. and Barz, Stefanie},
  journal={npj Quantum Information},
  volume={11},
  pages={37},
  year={2025},
  publisher={Nature Publishing Group},
  doi={10.1038/s41534-025-00992-4},
  url={https://doi.org/10.1038/s41534-025-00992-4}}

@article{Hauser2025BoostedBSM,
  title={Boosted {Bell}-state measurements for photonic quantum computation},
  author={Hauser, Nico and Bayerbach, Matthias J and D’Aurelio, Simone E and Weber, Raphael and Santandrea, Matteo and Kumar, Shreya P and Dhand, Ish and Barz, Stefanie},
  journal={npj Quantum Information},
  volume={11},
  number={1},
  pages={41},
  year={2025},
  publisher={Nature Publishing Group},
  doi={10.1038/s41534-025-00986-2},
  url={https://doi.org/10.1038/s41534-025-00986-2}}

@article{Asenbeck2024HybridBSM,
  title = {Hybrid Approach to Mitigate Errors in Linear Photonic {Bell}-State Measurement for Quantum Interconnects},
  author = {Asenbeck, Beate E. and Kawasaki, Akito and Boyer, Ambroise and Darras, Tom and Urvoy, Alban and Furusawa, Akira and Laurat, Julien},
  journal = {PRX Quantum},
  volume = {5},
  issue = {3},
  pages = {030331},
  numpages = {10},
  year = {2024},
  month = {Aug},
  publisher = {American Physical Society},
  doi = {10.1103/PRXQuantum.5.030331},
  url = {https://link.aps.org/doi/10.1103/PRXQuantum.5.030331}}

@article{Bayerbach2023,
  title={{Bell}-state measurement exceeding 50\% success probability with linear optics},
  author={Bayerbach, Matthias J and D’Aurelio, Simone E and van Loock, Peter and Barz, Stefanie},
  journal={Science Advances},
  volume={9},
  number={32},
  pages={eadf4080},
  year={2023},
  publisher={American Association for the Advancement of Science},
  doi={10.1126/sciadv.adf4080},
  url={https://doi.org/10.1126/sciadv.adf4080}}

@article{ZaidiVanLoock2013,
  title = {Beating the One-Half Limit of Ancilla-Free Linear Optics {Bell} Measurements},
  author = {Zaidi, Hussain A. and van Loock, Peter},
  journal = {Phys. Rev. Lett.},
  volume = {110},
  issue = {26},
  pages = {260501},
  numpages = {5},
  year = {2013},
  month = {Jun},
  publisher = {American Physical Society},
  doi = {10.1103/PhysRevLett.110.260501},
  url = {https://link.aps.org/doi/10.1103/PhysRevLett.110.260501}}

@misc{LeeJeong2013,
      title={{Bell}-state measurement and quantum teleportation using linear optics: two-photon pairs, entangled coherent states, and hybrid entanglement}, 
      author={Seung-Woo Lee and Hyunseok Jeong},
      year={2013},
      eprint={1304.1214},
      archivePrefix={arXiv},
      primaryClass={quant-ph},
      url={https://arxiv.org/abs/1304.1214}}

@article{LeeParkRalphJeong2015,
  title = {Nearly deterministic {Bell} measurement with multiphoton entanglement for efficient quantum-information processing},
  author = {Lee, Seung-Woo and Park, Kimin and Ralph, Timothy C. and Jeong, Hyunseok},
  journal = {Phys. Rev. A},
  volume = {92},
  issue = {5},
  pages = {052324},
  numpages = {9},
  year = {2015},
  month = {Nov},
  publisher = {American Physical Society},
  doi = {10.1103/PhysRevA.92.052324},
  url = {https://link.aps.org/doi/10.1103/PhysRevA.92.052324}}

@article{WeiBarreiroKwiat2007,
  title = {Hyperentangled {Bell}-state analysis},
  author = {Wei, Tzu-Chieh and Barreiro, Julio T. and Kwiat, Paul G.},
  journal = {Phys. Rev. A},
  volume = {75},
  issue = {6},
  pages = {060305(R)},
  numpages = {4},
  year = {2007},
  month = {Jun},
  publisher = {American Physical Society},
  doi = {10.1103/PhysRevA.75.060305},
  url = {https://link.aps.org/doi/10.1103/PhysRevA.75.060305}}

@article{PisentiGaeblerLynn2011,
  title = {Distinguishability of hyperentangled {Bell} states by linear evolution and local projective measurement},
  author = {Pisenti, N. and Gaebler, C. P. E. and Lynn, T. W.},
  journal = {Phys. Rev. A},
  volume = {84},
  issue = {2},
  pages = {022340},
  numpages = {5},
  year = {2011},
  month = {Aug},
  publisher = {American Physical Society},
  doi = {10.1103/PhysRevA.84.022340},
  url = {https://link.aps.org/doi/10.1103/PhysRevA.84.022340}}

@article{AkinNonlinear2025,
  title = {Faithful Quantum Teleportation via a Nanophotonic Nonlinear {Bell} State Analyzer},
  author = {Akin, Joshua and Zhao, Yunlei and Kwiat, Paul G. and Goldschmidt, Elizabeth A. and Fang, Kejie},
  journal = {Phys. Rev. Lett.},
  volume = {134},
  issue = {16},
  pages = {160802},
  numpages = {8},
  year = {2025},
  month = {Apr},
  publisher = {American Physical Society},
  doi = {10.1103/PhysRevLett.134.160802},
  url = {https://link.aps.org/doi/10.1103/PhysRevLett.134.160802}}

@misc{YamazakiIkutaYamamoto2023,
      title={Stabilizer formalism in linear optics and application to {Bell}-state discrimination}, 
      author={Tomohiro Yamazaki and Rikizo Ikuta and Takashi Yamamoto},
      year={2023},
      eprint={2301.06551},
      archivePrefix={arXiv},
      primaryClass={quant-ph},
      url={https://arxiv.org/abs/2301.06551}}

@misc{LahaVanLoock2026,
  title = {Auxiliary {Schmidt} Rank as a Resource for Photonic {Bell} Measurements},
  author = {Laha, Pradip and van Loock, Peter},
  year = {2026},
  eprint = {2606.24591},
  archivePrefix = {arXiv},
  primaryClass = {quant-ph},
  url = {https://arxiv.org/abs/2606.24591}
}

@article{Hilaire2023,
  title = {Linear Optical Logical {Bell} State Measurements with Optimal Loss-Tolerance Threshold},
  author = {Hilaire, Paul and Castor, Yaron and Barnes, Edwin and Economou, Sophia E. and Grosshans, Fr\'ed\'eric},
  journal = {PRX Quantum},
  volume = {4},
  issue = {4},
  pages = {040322},
  numpages = {18},
  year = {2023},
  month = {Nov},
  publisher = {American Physical Society},
  doi = {10.1103/PRXQuantum.4.040322},
  url = {https://link.aps.org/doi/10.1103/PRXQuantum.4.040322}}

@article{ReissVanLoock2026,
  title = {Optimal logical Bell measurements on stabilizer codes with linear optics},
  author = {Reiß, Simon D. and Loock, Peter van},
  journal = {Phys. Rev. A},
  pages = {},
  year = {2026},
  month = {Aug},
  publisher = {American Physical Society},
  doi = {10.1103/qj8s-j274},
  url = {https://link.aps.org/doi/10.1103/qj8s-j274}
}

@inproceedings{CliffordClifford2018,
  title={The classical complexity of boson sampling},
  author={Clifford, Peter and Clifford, Rapha{\"e}l},
  booktitle={Proceedings of the Twenty-Ninth Annual ACM-SIAM Symposium on Discrete Algorithms},
  pages={146--155},
  year={2018},
  organization={SIAM},
  doi={10.1137/1.9781611975031.10},
  url={https://doi.org/10.1137/1.9781611975031.10}}

@misc{Scheel2008,
  author = {Scheel, Stefan},
  title = {Permanents in linear optical networks},
  year = {2004},
  eprint = {quant-ph/0406127},
  archivePrefix = {arXiv},
  primaryClass = {quant-ph},
  url = {https://arxiv.org/abs/quant-ph/0406127}
}

@misc{SmithKaplan2018,
      title={Approaching near-perfect state discrimination of photonic Bell states through the use of unentangled ancilla photons}, 
      author={Jake A. Smith and Lev Kaplan},
      year={2018},
      eprint={1802.10527},
      archivePrefix={arXiv},
      primaryClass={quant-ph},
      url={https://arxiv.org/abs/1802.10527}, 
}

@article{Ivanovic1987,
title = {How to differentiate between non-orthogonal states},
journal = {Physics Letters A},
volume = {123},
number = {6},
pages = {257-259},
year = {1987},
issn = {0375-9601},
doi = {10.1016/0375-9601(87)90222-2},
url = {https://www.sciencedirect.com/science/article/pii/0375960187902222},
author = {I.D. Ivanovic}}

@article{Dieks1988,
title = {Overlap and distinguishability of quantum states},
journal = {Physics Letters A},
volume = {126},
number = {5},
pages = {303-306},
year = {1988},
issn = {0375-9601},
doi = {10.1016/0375-9601(88)90840-7},
url = {https://www.sciencedirect.com/science/article/pii/0375960188908407},
author = {D. Dieks}}

@article{Peres1988,
title = {How to differentiate between non-orthogonal states},
journal = {Physics Letters A},
volume = {128},
number = {1},
pages = {19},
year = {1988},
issn = {0375-9601},
doi = {10.1016/0375-9601(88)91034-1},
url = {https://www.sciencedirect.com/science/article/pii/0375960188910341},
author = {Asher Peres}}

@article{roga2020classical,
  title={Classical simulation of boson sampling with sparse output},
  author={Roga, Wojciech and Takeoka, Masahiro},
  journal={Scientific Reports},
  volume={10},
  number={1},
  pages={14739},
  year={2020},
  publisher={Nature Publishing Group UK London},
  doi={10.1038/s41598-020-71892-0},
  url={https://doi.org/10.1038/s41598-020-71892-0}}

@article{jacob2020franck,
  title = {{Franck-Condon} factors via compressive sensing},
  author = {Jacob, Kevin Valson and Kaur, Eneet and Roga, Wojciech and Takeoka, Masahiro},
  journal = {Phys. Rev. A},
  volume = {102},
  issue = {3},
  pages = {032403},
  numpages = {10},
  year = {2020},
  month = {Sep},
  publisher = {American Physical Society},
  doi = {10.1103/PhysRevA.102.032403},
  url = {https://link.aps.org/doi/10.1103/PhysRevA.102.032403}}

@article{tagliacozzo2009simulation,
  title = {Simulation of two-dimensional quantum systems using a tree tensor network that exploits the entropic area law},
  author = {Tagliacozzo, Luca and Evenbly, Glen and Vidal, Guifr{\'e}},
  journal = {Physical Review B},
  volume = {80},
  number = {23},
  pages = {235127},
  year = {2009},
  publisher = {American Physical Society},
  doi = {10.1103/PhysRevB.80.235127}}

@article{Seitz2023simulatingquantum,
  doi = {10.22331/q-2023-03-30-964},
  url = {https://doi.org/10.22331/q-2023-03-30-964},
  title = {Simulating quantum circuits using tree tensor networks},
  author = {Seitz, Philipp and Medina, Ismael and Cruz, Esther and Huang, Qunsheng and Mendl, Christian B.},
  journal = {Quantum},
  volume = {7},
  pages = {964},
  year = {2023}
}

\end{document}